\documentclass[12pt]{article}
\usepackage[dvipsnames]{xcolor}
\usepackage{braket,amsfonts,amsmath,amssymb,amsthm}
\usepackage{mathrsfs,url,hyperref,tikz,graphicx,comment}

\usepackage{pgfplots}
\usepgfplotslibrary{colorbrewer}
\usetikzlibrary{decorations}
\usetikzlibrary{decorations.pathmorphing}
\usetikzlibrary{shapes.geometric}
\pgfplotsset{compat=newest}


\title{How a Space-Time Singularity Helps Remove the Ultraviolet Divergence Problem}

\author{
Joscha Henheik\footnote{Institute for Science and Technology Austria, Am Campus 1, 3400 Klosterneuburg, Austria. E-mail: joscha.henheik@ist.ac.at},~
Bipul Poudyal\footnote{Mathematisches Institut,
	Eberhard-Karls-Universit\"at T\"ubingen, Auf der Morgenstelle 10,
	72076 T\"ubingen, Germany},\footnote{E-mail: 
	bipulpoudyal2015@gmail.com}~~and
Roderich Tumulka$^\dag$\footnote{E-mail:
     roderich.tumulka@uni-tuebingen.de}
}
\date{February 25, 2025}

\addtolength{\textwidth}{2.0cm}
\addtolength{\hoffset}{-1.0cm}
\addtolength{\textheight}{2.4cm}
\addtolength{\voffset}{-1.5cm}

\newcommand{\I}{\mathrm{i}}
\newcommand{\E}{\mathrm{e}}
\newcommand{\D}{\mathrm{d}}
\newcommand{\Hilbert}{\mathscr{H}}
\newcommand{\Kilbert}{\mathscr{K}}
\newcommand{\sA}{\mathscr{A}}
\newcommand{\sM}{\mathscr{M}}
\newcommand{\Fock}{\mathscr{F}}

\newcommand{\Q}{\mathcal{Q}}
\renewcommand{\Re}{\mathrm{Re}}
\renewcommand{\Im}{\mathrm{Im}}
\newcommand{\PPP}{\mathbb{P}}
\newcommand{\RRR}{\mathbb{R}}
\newcommand{\CCC}{\mathbb{C}}
\newcommand{\SSS}{\mathbb{S}}
\newcommand{\NNN}{\mathbb{N}}
\newcommand{\ZZZ}{\mathbb{Z}}

\newcommand{\ve}{\boldsymbol{e}}

\newcommand{\vv}{\boldsymbol{v}}

\newcommand{\vsigma}{\boldsymbol{\sigma}}
\newcommand{\vomega}{\boldsymbol{\omega}}
\newcommand{\vzero}{\boldsymbol{0}}

\newcommand{\be}{\begin{equation}}
\newcommand{\ee}{\end{equation}}

\newtheorem{prop}{Proposition}
\newtheorem{lemma}{Lemma}
\newtheorem{theorem}{Theorem}

\theoremstyle{definition}
\newtheorem{remark}{Remark}

\begin{document}
\maketitle
\begin{abstract}
	Particle creation terms in quantum Hamiltonians are usually ultraviolet divergent and thus mathematically ill defined. A rather novel way of solving this problem is based on imposing so-called interior-boundary conditions on the wave function. Previous papers showed that this approach works in the non-relativistic regime, but particle creation is mostly relevant in the relativistic case after all. In flat relativistic space-time (that is, neglecting gravity), the approach was previously found to work only for certain somewhat artificial cases. Here, as a way of taking gravity into account, we consider curved space-time, specifically the super-critical Reissner-Nordstr\"om space-time, which features a naked timelike singularity. We find that the interior-boundary approach works fully in this setting; in particular, we prove rigorously the existence of well-defined, self-adjoint Hamiltonians with particle creation at the singularity, based on interior-boundary conditions. We also non-rigorously analyze the asymptotic behavior of the Bohmian trajectories and construct the corresponding Bohm-Bell process of particle creation, motion, and annihilation. The upshot is that in quantum physics, a naked space-time singularity need not lead to a breakdown of physical laws, but on the contrary allows for boundary conditions governing what comes out of the singularity and thereby removing the ultraviolet divergence.
	
	\bigskip
	
	\noindent{\bf Key words:}
	Reissner-Nordstr\"om space-time; naked timelike singularity; interior-boundary condition; particle creation; self-adjoint extension; Bohmian mechanics; regularization of quantum field theory.
\end{abstract}

\section{Introduction}

It is a notoriously difficult problem \cite{Opp30, Nel64, GJ87} (and still active \cite{Pos20, LL23}) to rigorously implement particle creation and annihilation in quantum Hamiltonians at point sources, as they are usually plagued by ultraviolet (UV) divergences. The traditional way of resolving this issue is to employ so-called UV cut-offs (e.g., \cite{Nel64}, see also \cite[Sec.~6.2.5]{Tum22}), corresponding to smearing out the source of particle creation to a positive volume, and (if possible) defining a renormalized Hamiltonian \cite{vH52,Nel64,Der03,GW15} in a limiting procedure removing the cut-off. A different, rather novel approach to this problem is based on \emph{interior-boundary conditions} (IBCs) \cite{TT15a,TT15b}: These relate the wave function $\psi$, defined on a configuration space of a variable number of particles, at the \emph{interior} of the $n$-particle sector to the \emph{boundary} (i.e., where creation/annihilation occurs) of the $n+1$ particle sector. 

The IBC approach has previously successfully been applied in the non-relativistic setting \cite{ibc2a,Lam18}, i.e., for the Schr\"odinger equation involving the Laplacian. However, since particle creation is mostly relevant in the relativistic case, it is of particular importance to study the IBC approach in that setting, for example for the Dirac equation. In flat relativistic space-time (i.e., neglecting gravity), two of us have shown the following \emph{no-go} result (see \cite[Theorem 1]{HT20}): In three spatial dimensions, there exists no self-adjoint Hamiltonian on Fock space that involves particle creation and annihilation at the origin but otherwise acts like the free Dirac Hamiltonian. 
As a positive, but somewhat artificial result \cite[Theorem 6]{HT20}, it was shown that IBC Hamiltonians with particle creation at the origin \emph{can} in fact be implemented in that setting upon adding a sufficiently strong Coulomb potential at the origin. Here, we obtain an IBC Hamiltonian without coupling to a Coulomb potential; we do so by relying only on gravity in a general-relativistic way. In fact, the presence of a space-time singularity makes the IBC approach work without the assumption of a strong Coulomb potential. For further works on IBCs, see \cite{Keppeler_2016,Lienert_2019,IBCdiracCo1,Pos20,BL21}.

In another recent work \cite{HT22}, some of us studied the corresponding Bohmian trajectories and (non-rigorously) constructed a $|\Psi|^2$-distributed Markov jump process (the \emph{Bohm-Bell process} \cite{Bell86,DGTZ04}) in the configuration space of a variable number of particles. Here, we provide the analogous construction with gravity (see Sections~\ref{sec:introBohm} and \ref{sec:2traj}).

In this paper, as a way of taking gravity into account, we consider curved space-time, specifically the super-critical Reissner-Nordström (sRN) space-time \cite{Nor13,Rei16,Wey18,Nor18,HE73}, which is the static curved space-time surrounding a single charged point mass, a solution of the Einstein-Maxwell equations with mass $M\geq 0$, charge $Q\in\RRR$, and angular momentum $0$, where ``super-critical'' means
\be\label{supercritical}
|Q| > M\,.
\ee
More precisely, the super-critical Reissner-Nordstr\"om space-time is given by the manifold $\sM=\RRR\times (\RRR^3\setminus \{\vzero\})\cong \RRR \times (0,\infty) \times \SSS^2$, where $\vzero$ denotes the origin of $\RRR^3$, equipped with the Lorentzian metric $g$ with line element
\be\label{RNmetric}
\D s^2 = A^{2}(r)\, \D t^2 - \dfrac{1}{A^{2}(r)} \,\D r^2 
-r^2 \,\D\vomega^2
\ee
in spherical coordinates ($t \in \RRR ; r \in (0, \infty) ; \vomega \in \SSS^2$). Here, $\D\vomega^2=\D\vartheta^2 + \sin^2 \vartheta\, \D\varphi^2$ in terms of the polar angle $\vartheta$ and the azimuthal angle $\varphi$, and we used natural units $\hbar = c = G= 1$ and the abbreviation 
\begin{equation}\label{Adef}
	A^{2}(r) := 1 - \dfrac{2M}{r} + \dfrac{Q^2}{r^2}
\end{equation}
with parameters $M$ and $Q$ representing the Arnowitt-Deser-Misner (ADM) mass and charge of the metric.
Finally, the electromagnetic four-vector potential is denoted by
\be
A_\mu = (Q/r,0,0,0 )\,,
\ee
not to be confused with the scalar $A$ function introduced in \eqref{Adef}.
In the super-critical regime \eqref{supercritical}, where $A^2(r)>0$ for all $r$, the singularity is timelike and naked (i.e., not surrounded by a horizon), and the metric is static and asymptotically flat. We also take $A(r)>0$. The singularity will be regarded here as the boundary of $\sM$, i.e., $\partial \sM = \{r = 0\} = \RRR \times \{0\} \times \SSS^2$.

We note in passing that the charge and mass values of every charged particle in the standard model of elementary particles and every stable nucleus satisfy the super-criticality condition \eqref{supercritical}, in fact by a large margin of a factor $>10^{15}$, so the classical space-time surrounding an elementary particle would be sRN, provided that the spin does not contribute to the angular momentum of the space-time. While it is not known whether real elementary particles involve space-time singularities, we are in part motivated by the possibility that they might (see Section~\ref{subsec:stsing} for more discussion).

\begin{figure}[h]
\begin{center}
\begin{tikzpicture}[scale=1.8]
\begin{axis}
  [
    axis line style={draw=none},
    tick style={draw=none},
    axis on top=false,
    xtick=\empty,
    ytick=\empty,
    ztick=\empty,
    colormap/Purples-3,
]
\def\d{1};

  \addplot3[surf, shader=flat, draw= black, opacity=0.05, domain=0:14,domain y= 0:40,samples=40,samples y=70]({1.7*x^(2.2)*cos(10*y)},{1.7*x^(2.2)*sin(10*y)},{\d*x});
  
    \addplot3[ surf, shader=flat, draw= black, opacity=0.04, domain=0:4, samples=50]({1.7*x^(2.2)*cos(40*x)},{1.7*x^(2.2)*sin(40*x)},{\d*x});
  
  \addplot3[surf, shader=flat, draw= black, opacity=0.04, domain=4:10,samples=50]({1.7*x^(2.2)*cos(40*x)},{1.7*x^(2.2)*sin(40*x)},{\d*x});
  
    \addplot3[surf, shader=flat, draw= black, opacity=0.04, domain=10:11.9,samples=30]({1.7*x^(2.2)*cos(40*x)},{1.7*x^(2.2)*sin(40*x)},{\d*x});
    
        \addplot3[dotted, surf, shader=flat, draw= black, opacity=0.04, domain=11.9:12.5,samples=15]({1.7*x^(2.2)*cos(40*x)},{1.7*x^(2.2)*sin(40*x)},{\d*x});
 
    \draw[fill=black](0.5,0,15.35) circle (1 pt) node [above] {\tiny $q$,$m$};
  \draw[fill=black](0,0,0) circle (1.2 pt) node [below] {\tiny $Q$,$M$};
  
\end{axis}
\end{tikzpicture}
\vspace{-20mm}
\end{center}
\caption{Qualitative depiction of the setup in this paper: A relativistic quantum mechanical spin-1/2 particle of charge $q$ and mass $m$ moves in a curved space representing the gravitational field of a ``source particle'' with charge $Q$ and mass $M$ (and fixed location, which then is a curvature singularity). The quantum particle can be absorbed and emitted by the source particle. The trajectory shown is a Bohmian trajectory of the quantum particle shortly before absorption or after emission by the source particle.}
	\label{fig:basic}
\end{figure}
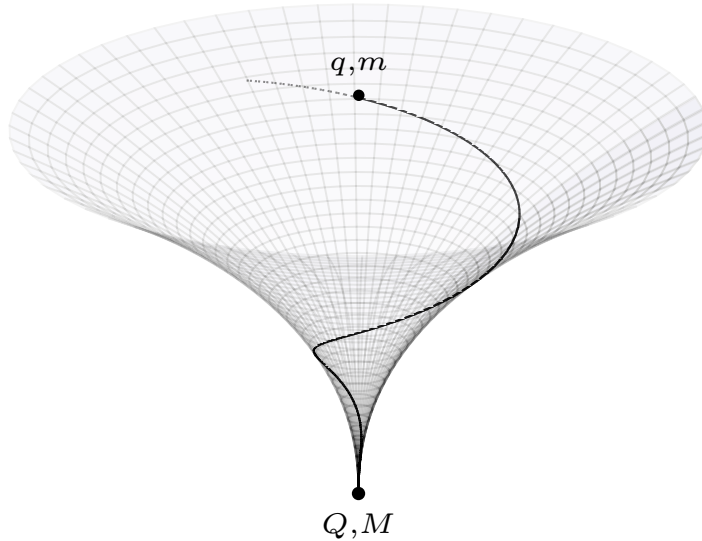

The basic physical picture, illustrated in Figure \ref{fig:basic} and underlying the entire paper, is that a relativistic quantum-mechanical spin-$1/2$ particle of mass $m \ge 0$ and charge $q \in \RRR$ can be emitted and absorbed at the singularity $\partial \sM$. In our setting, this can naturally be associated with a ``source particle" of mass $M \ge 0$ and charge $Q \in \RRR$ obeying \eqref{supercritical} (see Section \ref{subsec:model} for further details). Away from the singularity, the wave function of the quantum particle is governed by the Dirac equation on sRN space-time with Hamiltonian $H_1$ explicitly given in \eqref{DiracHamiltonian} below. (It would be of interest to treat photon wave functions, which have spin 1, but here we focus on spin $1/2$ as the simplest case.)

As our results, briefly described in Section \ref{subsec:resultsbrief} below, we (i) rigorously construct a self-adjoint Hamiltonian $H$ with particle creation, based on IBCs (see Theorem~\ref{theorem3.1} in Section~\ref{sec:thm1}), and (ii) non-rigorously analyze the asymptotic behavior of the Bohmian trajectories close to the space-time singularity in sRN and construct the corresponding Bohm-Bell process (see Section \ref{subsec:jump}, in particular Proposition \ref{prop:2}) for a particular choice of $H$ and ``nice'' wave functions. 

It follows that the quantum particle has nonzero probability to hit the singularity, although the latter could be thought of as a 0d set in 3d space, and the probability of hitting a generic 0d or 1d set vanishes. (The reason for this kind of effective attraction to the singularity $\partial\sM$ is that at $\partial\sM$, the arriving wave function will be transported to the 0-particle sector of Fock space, thereby effectively exerting a kind of suction on the nearby wave function.)

\subsection{Description of Our Main Results} \label{subsec:resultsbrief}

In this section, we briefly describe the main results of the present paper and provide some comments on them. Full details are given in Section \ref{sec:results}. 

\subsubsection{IBC Hamiltonian with Particle Creation} 

As our first main result, Theorem \ref{theorem3.1}, we devise a certain Hamiltonian $H$ with particle creation and annihilation, and prove that it is self-adjoint. As mentioned above, emission/absorption of a particle occurs at a single point in space (or world line in space-time), the naked singularity in sRN space-time \eqref{RNmetric}. Thus, on the one hand, the present work rigorously extends the IBC approach to curved space-time (with fixed background metric), and on the other hand, our treatment deals with (and gives physical meaning to) the well-known fact \cite{CP82} that the 1-particle Dirac Hamiltonian $H_1$ on the sRN space-time is not essentially self-adjoint, and thus does not uniquely define a unitary time evolution. Our Hamiltonian $H$ is based on $H_1$ but is defined on a version of Fock space, as appropriate for particle creation. For simplicity, we consider only the 0-particle and 1-particle sectors of Fock space (but our approach could be extended to the full Fock space along the lines of \cite{ibc2a}). It is common to exclude wave functions of negative energy as unphysical, but we will not exclude them in our model. Our proofs make particular use of mathematical results of Cohen and Powers \cite{CP82} about the domain of the adjoint of $H_1$, in particular described by the asymptotic behavior of wave functions near the singularity. These asymptotics are then exploited to devise an IBC, coupling the 1-particle to the 0-particle sector and thus constituting the Hamiltonian $H$. 

We finally remark that, in \cite[Eq.~(52)]{timelike}, one of us already conjectured an IBC for this case; the IBC investigated here is similar but not identical, and we leave open the question whether a self-adjoint Hamiltonian can be devised for the IBC of \cite{timelike}. For a comparison of the two IBCs, see Remark~\ref{rem:timelike1} in Section~\ref{sec:thm1}.

\subsubsection{Bohmian Trajectories and Bohm-Bell Jump Process}
\label{sec:introBohm}

As our second main result (see Section \ref{subsec:jump}, in particular Proposition \ref{prop:2}), in addition to the Hamiltonian $H$, we construct the Bohm-Bell process \cite{Bell86,DGTZ04} for a particular choice of $H$ (viz., $\widetilde{\kappa}_j=\pm 1$ in the notation of Section~\ref{sec:thm1}) and an initial wave function $\Psi_0$ from a suitable subspace of Hilbert space. It is a piecewise-deterministic Markovian jump process in the configuration space of a variable number of particles that is $|\Psi_t|^2$ distributed at every coordinate time $t$, and its jumps correspond to the creation or annihilation events. Similar processes were devised in \cite{bohmibc} for non-relativistic space-time and in \cite{HT22} for Minkowski space-time with a Coulomb field. While we do not rigorously prove the existence of the process, we can specify what its defining equations must be, in particular the law \eqref{jumprate} for the rate of particle creation at the singularity. This rate depends on the wave function and thus on time, while the direction of emission is uniformly distributed over all directions. A similar law had been conjectured in \cite{timelike}. 

Here is a comparison between the non-relativistic \cite{ibc2a}, the special-relativistic \cite{HT22}, and the present general-relativistic case (summarized in Table~\ref{tab:1}). While the special-relativistic process circles the origin infinitely many times before hitting it, our process does not, and thus is similar in this respect to the non-relativistic process. Another such similarity concerns the radial speed with which the quantum particle hits the origin: while it does with speed 0 in the special-relativistic case, it does with nonzero speed in our and the non-relativistic case. 
(Note that the geometrically appropriate way of measuring this speed is $\D R/\D t$, where $R$ denotes the Regge-Wheeler ``tortoise'' coordinate that makes $(t,R)$ conformally Lorentzian, see Section~\ref{subsec:Rcoord}.)

\begin{table} 
\begin{center}
\begin{tabular}{|r|c|c|c|}
\hline
&non-rel.~\cite{ibc2a}&~~~SR~\cite{HT22}~~~&~GR~(here)~\\\hline\hline
&&&\\
$\displaystyle\frac{\D r}{\D t}(t_0)$ & $\neq 0$ & 0 & $\displaystyle \frac{\D R}{\D t}(t_0)\neq 0$ \\
&&&\\\hline
&&&\\
$\vartheta(t_0)$ & const. & const. & const. \\
&&&\\\hline
&&&\\
$\varphi(t_0)$ & const. & $\to\pm\infty$ & const.\\
&&&\\\hline
\end{tabular}
\end{center}
\caption{Comparison between the Bohm-Bell processes in the non-relativistic, the special-relativistic, and the general-relativistic case; $t_0$ is the time of absorption or emission, and $R$ means the ``tortoise'' (conformally Lorentzian) coordinate defined in \eqref{Rdef}; see \eqref{R(t)} for the relation $R(t)$ and \eqref{eq:asymp} for $\vartheta(t)$ and $\varphi(t)$.}
\label{tab:1}
\end{table}

\subsection{Structure of the Paper} \label{subsec:structure}

The remainder of this paper is organized as follows. In Section~\ref{sec:back}, we put the results into context and provide relevant background information. In Section~\ref{sec:results}, we state our main results. In Section~\ref{chapter2}, we review the known facts about the Dirac equation in the sRN space-time. In Section~\ref{chapter3}, we prove our theorem about the existence of the IBC Hamiltonian. In Section~\ref{chapter4}, we give the details about the construction of the associated Bohm-Bell process. In Section~\ref{sec:conclusions}, we conclude. In Appendix~\ref{app:Phi}, we provide the explicit form of the angular momentum eigenfunctions in a spinor basis corresponding to spherical coordinates.

\section{Motivation, Significance, and Background}
\label{sec:back}

In this section, we further motivate our paper, connect our results to existing literature, and provide additional background information.

\subsection{Space-Time Singularities} \label{subsec:stsing}

One motivation for this research concerns the status of space-time singularities (i.e., of points of infinite space-time curvature): It would seem that the laws of physics break down at singularities, as anything could come out of a singularity if it is timelike (as it is for the sRN space-time). However, in the model considered here, certain laws of nature (the IBC and the law determining the creation rate) govern what comes out of the singularity. That is, the singularity does not lead to a breakdown of physical laws, it provides just the room needed for imposing laws for particle creation and annihilation; this point is discussed further in \cite{timelike}.

Here is how that is related to Roger Penrose's (weak) cosmic censorship conjecture \cite{Penrose1980ge}, which states that naked singularities generically do not form according to general relativity and classical evolution through gravitational collapse from non-singular initial data. Even if that is true, it leaves open whether elementary particles might involve naked singularities, and whether naked singularities might occur in the quantum world. Anyway, we find the possibility of naked singularities worthy of study, in part \emph{because} our results show that they need not entail a breakdown of physical laws, but rather a gap in the physical laws that can be filled by the laws studied here.

\subsection{Ultraviolet Divergence} \label{subsec:UV}

Another aspect concerns the problem of ultraviolet infinities. Terms in a Hamiltonian representing particle creation and annihilation at a point source usually diverge, which keeps the Hamiltonian from being rigorously defined. For example, even in non-relativistic quantum mechanics, the naive Hamiltonian of a quantum particle that can be emitted or absorbed at the origin of 3d Euclidean space is ultraviolet divergent (see, e.g., \cite[Sec.~6.2.5]{Tum22} for discussion). (In particular, the problem arises also if the emitting particle is classical and if emitted particles do not interact.)
Sometimes, renormalization can provide a way of rigorously defining a Hamiltonian \cite{Nel64,Der03} by means of a limiting procedure. Here, we follow a different approach based on IBCs \cite{TT15a,TT15b}, which allow us to directly characterize the Hamiltonian and its domain without a limiting procedure; IBCs are mathematically related to point interactions \cite{AGHKH88,BP35}. We limit ourselves to the (easier) case in which the source (i.e., the emitting and absorbing particle) is fixed at some point (taken to be the coordinate origin). This case was studied for non-relativistic Hamiltonians (based on the Laplacian) in \cite{ibc2a}. 

For the question of whether IBCs can be relevant to realistic quantum field theories, it matters whether they can be applied in a relativistic setting. As a test case, we assume that the particles created are spin-$\tfrac{1}{2}$ particles governed by the Dirac equation. (Photons would be even more interesting, but no general formula is known for their probability current \cite[Sec.~7.3.9]{Tum22}, which is why we prefer the Dirac equation.) It has been shown \cite{HT20} that in Minkowski space-time, IBCs can work in the (somewhat artificial) setting of the Dirac particles feeling a sufficiently strong Coulomb potential around the source, but not if the Coulomb potential is absent or too weak. That sounds not very encouraging; it sounds as if IBCs often failed to work, and as if we should not expect that IBCs could one day be found to work for uncharged relativistic particles such as photons. 

However, the picture changes a lot with the results of the present paper. Basically, the gravitational field of the source (which would also apply to photons) makes the IBC approach work in a similar way as for a strong Coulomb field, regardless of how big the charge $q$ and the mass $m$ of the Dirac particles are. In particular, it also works for uncharged and/or massless particles.\footnote{On the other hand, we use here that the source has sufficiently large charge, $|Q|>M$, but that is, first, actually satisfied for the charge and mass of an electron (as we often think of the sRN space-time as the gravitational field of an elementary particle), and second, it is not so much an issue of the IBC approach as one of the Einstein equation, as the Reissner-Nordstr\"om space-time for $0<|Q|\leq M$ has a complicated structure with infinitely many singularities, wormholes, and asymptotically flat regions \cite[Fig.s 25 and 26(i)]{HE73}, while for $Q=0$ it becomes the Schwarzschild space-time, for which the singularity becomes spacelike and thus not at all like the world line of a particle.} That is, the present paper provides support for expecting the applicability of the IBC approach in more realistic models.

\subsection{Self-adjoint Extensions on Fock Space}
\label{sec:extension}

Mathematically, our problem can be expressed in terms of self-adjoint extensions. This is because, apart from particle creation and annihilation (which happens only at certain places), the Hamiltonian $H$ acts as the Dirac Hamiltonian $H_1$ and we thus devise a self-adjoint extension of $H_1$ to an enlarged Hilbert space, a (truncated) Fock space. (Note, however, that unlike usual self-adjoint extensions, which start from a densely defined operator, $H_1$ in our case is not densely defined, see below.)

In curved space-time, a 1-particle wave function $\psi$ is a cross-section of a smooth complex vector bundle $S$ over $\sM$ (called the spinor bundle) with fibers $S_x$ (called the Dirac spin space) for $x\in\sM$ that are 4-dimensional complex vector spaces. 

For the construction of our $H$,
we are building on previous work on the Dirac Hamiltonian on sRN space-time \cite{CP82,Bel98,BMM00,FSY00,NJ16,KTZT20}, particularly on \cite{CP82}. A crucial difference to these prior works is that, since we consider a mini-Fock space consisting of merely the 0-particle and 1-particle sector, our Hilbert space is 1 dimension larger than what was considered in the prior works: If $\Sigma$ is a $t=\mathrm{const.}$ surface for the Reissner-Nordstr\"om time coordinate $t$, then the prior works considered the 1-particle Hilbert space $\Hilbert^{(1)}$ of functions $\psi:\Sigma\to S$ that are cross-sections (i.e., $\psi(x)\in S_x$) with $\langle \psi,\psi\rangle <\infty$ for the inner product
\be\label{scpdef}
\langle\psi,\phi\rangle = \int_\Sigma V(\D^3x) \, \overline{\psi}(x) \, n_\mu(x) \, \gamma_x^\mu \, \phi(x) \,,
\ee  
where $V$ is the Riemann volume measure arising from the 3-metric on $\Sigma$ and $n_\mu(x)$ the future unit normal vector to $\Sigma$ at $x$ (see \cite[Sec.~7.3.4]{Tum22} for why this is a Hilbert space). Note that the sesquilinear form $(\psi(x),\phi(x))\mapsto \overline{\psi}(x) \phi(x)$ on $S_x$ is (Lorentz invariant and) indefinite of signature $++--$; its coordinate expression is given in \eqref{bardef} below.

In contrast, since we consider particle creation, our Hilbert space is the orthogonal sum
\be\label{Hilbertdef}
\Hilbert=\Hilbert^{(0)}\oplus \Hilbert^{(1)}
\ee
of the 0-particle space $\Hilbert^{(0)}$ and the 1-particle space $\Hilbert^{(1)}$ and thus constitutes a truncated Fock space. The 0-particle space $\Hilbert^{(0)}=\CCC$ is 1-dimensional (because it is spanned by the vacuum state). The Dirac Hamiltonian is at first defined as a differential operator $H_1^0$ on a dense domain $D_1^0$ in $\Hilbert^{(1)}$; while the prior works were studying self-adjoint extensions in $\Hilbert^{(1)}$, we are looking at self-adjoint extensions in $\Hilbert=\Hilbert^{(0)}\oplus \Hilbert^{(1)}$; in particular, the operator $H_1^0$ we extend is densely defined in $\Hilbert^{(1)}$ but not in $\Hilbert$. If $H_1^0$ were \emph{essentially self-adjoint} in $\Hilbert^{(1)}$, it would have a unique self-adjoint extension in $\Hilbert^{(1)}$, and that would be bad for our purpose because it would entail \cite[Theorem 1]{HT20} that all self-adjoint extensions in $\Hilbert$ are block diagonal, which means that no transitions between $\Hilbert^{(0)}$ and $\Hilbert^{(1)}$ ever occur, and thus no particle creation or annihilation takes place. However, as found in \cite{CP82}, $H_1^0$ is not essentially self-adjoint in $\Hilbert^{(1)}$, which gives us room to impose an IBC to obtain a self-adjoint extension $H$ in~$\Hilbert$.

The situation here is different from that in Minkowski space-time: In the latter case, for an uncharged particle ($q=0$) on Euclidean 3-space with one point (say, the origin $\vzero$) removed, the Dirac Hamiltonian is essentially self-adjoint \cite{Sve81}. This roughly means that no probability can flow into or out of the point $\vzero$ and has the consequence \cite{HT20} that no IBC Hamiltonian with particle creation exists. As mentioned in the introduction, that changes when a sufficiently strong Coulomb field is added to the Hamiltonian: then the Dirac Hamiltonian is not essentially self-adjoint, and IBC Hamiltonians exist \cite{HT20}. In the present paper, the action of a Coulomb field on the quantum particle is not necessary (i.e., we can allow $q=0$), as the gravitational field alone already lifts the essential self-adjointness of the Dirac Hamiltonian. In fact, we can even allow $m=0$, and the gravitational field of the
sRN metric with parameters $Q,M$ is still sufficient to ensure that the Dirac Hamiltonian is not essentially self-adjoint, and an IBC Hamiltonian exists.

We do not aim here at identifying all possible IBC Hamiltonians on the sRN space-time; we limit ourselves to a few examples. 

For Reissner-Nordstr\"om space-times in the subcritical regime $|Q|< M$ or the critical regime $|Q|=M$, we expect an IBC to be implementable as well because they have neighborhoods of the singularities that look qualitatively similar to the sRN space-time; however, due to wormholes and several asymptotically flat regions, they are more complicated (and less natural as a model of a point source).

\subsection{Trajectories}
\label{sec:2traj}

We also introduce the natural analog of the Bohm-Bell process for our Hamiltonian $H$ (see Section \ref{subsec:jump}). 
The Bohm-Bell process \cite{Bell86,DGTZ04} is the natural extension of Bohmian mechanics \cite{Bohm52,DT09,Tum22} to include particle creation and annihilation. The process is a Markov process in configuration space; the creation and annihilation events correspond to jumps, as the number of particles changes at these events. Between the jumps, the process is deterministic and follows the Bohmian equation of motion. 

The value of Bohmian mechanics lies in the fact that it provides a realist version of quantum theory \cite{Tum22} while its empirical predictions agree with the standard ones. In fact, Bohmian mechanics resolves the paradoxes and inconsistencies of orthodox quantum mechanics and introduces precision where orthodox quantum mechanics is vague, specifically in the theory of measurement. The Bohm-Bell process that we develop here contributes a further step towards a convincing extension of Bohmian mechanics to quantum field theory.

\subsection{On the Structure of the Model} \label{subsec:model}

Here is how our model fits into a wider class of models. It involves two kinds of particles, let us call them $x$-particles and $y$-particles. The $x$-particles can emit and absorb $y$-particles as in the scheme $x \leftrightarrows x+y$. The $x$-particles have mass $M$ and charge $Q$, the $y$-particles mass $m$ and charge $q$. We treat the $y$-particles quantum-mechanically, whereas the $x$-particles (the sources of emission and absorption) are treated here as non-dynamical and just sit at fixed positions. We limit ourselves to the case of a single $x$-particle and include the classical, general-relativistic gravitational and electromagnetic fields generated by $x$, which is the Reissner-Nordstr\"om space-time \eqref{RNmetric}, considered here for $|Q|>M$ (sRN); recall \eqref{supercritical}. The metric is singular at the location of the $x$-particle, which is why the $x$-particle can be identified with the sRN singularity. If we wanted to treat the $x$-particles quantum-mechanically, while including their general-relativistic gravitational fields, we would presumably need a quantum gravity theory. A non-relativistic IBC-model with quantum-mechanical $x$-particles was defined in \cite{Lam18}.

We remark that our model breaks rotational invariance (which would imply conservation of angular momentum) because under the simplifying assumptions made here, that $x$ has spin 0 and $y$ spin 1/2, local conservation of angular momentum during creation or annihilation events is impossible (already in flat space-time \cite[Sec.~2.4]{HT20}). We expect that IBC Hamiltonians will respect rotational symmetry in more realistic models. The model also violates interaction locality, i.e., the condition that the Hamiltonian contains no interaction between spacelike separated regions. The simple reason is that we allow only 0 or 1 $y$-particles, so once a $y$-particle has been created, and perhaps traveled far away, another creation event at the origin is not possible. We expect that the corresponding model on a full Fock space, allowing all $n\in \NNN\cup\{0\}$ for the number of $y$-particles, will respect interaction locality.

\subsection{Dirac Equation in Curved Space-Time}

There is a standard way of defining the 1-particle Dirac equation in a curved space-time $(\sM,g)$ (see, e.g., \cite{Fock29,IW33,BMB81,PR84,LM89}), which we recall here for convenience of the reader. We also refer to the recent textbook \cite[Chapter 4]{Finsterbook} for an elementary introduction to spinors in curved space-time.

\subsubsection{Coordinate-free Form}

As mentioned already in Section~\ref{sec:extension}, the 1-particle wave function $\psi$ is a cross-section of a vector bundle $S$ over $\sM$ whose fibers $S_x$ are the Dirac spin spaces.
The vector bundle $S$ is equipped with an irreducible representation of the (complexified) Clifford algebra $\mathrm{Cl}_{\mathbb{C}}(T_{x}\sM,g)$ on the spin spaces, $ \mathrm{Cl}_{\CCC}(T_{x}\sM,g) \rightarrow \mathrm{End}(S_{x})$, where $T_x\sM$ is the tangent space at $x\in\sM$; since $T_x\sM$ is itself embedded in the Clifford algebra, the representation includes a linear mapping $\gamma_x: T_x\sM \to \mathrm{End}(S_x)$, called the general-relativistic gamma matrices and subject to the Clifford relation
\be\label{gamma}
\gamma^{\mu}_x\gamma^{\nu}_x + \gamma^{\nu}_x\gamma^{\mu}_x = 2g^{\mu\nu}(x)\,I_x \,,
\ee
where $\gamma^{\mu}_x = \gamma_x(e^{\mu})$ for any basis $e^0,e^1,e^2,e^3$ of $T_x\sM$ and $I_x$ denotes the identity operator in $S_x$. If $(\sM,g)$ is orientable, time-orientable, and possesses spin structure \cite[(1.5.3)]{PR84}, which sRN does \cite[(1.5.6)]{PR84}, then the bundle $S$ and the above-mentioned representation exist; if $\sM$ is simply connected, which the sRN manifold is, then they are unique up to isomorphism \cite[p.~54]{PR84}. We also need the appropriate connection on $S$ or covariant derivative
\be
\nabla : \Gamma(S) \rightarrow \Gamma(T^{*}\sM \otimes S)\,,
\ee
where $\Gamma(S)$ denotes the set of smooth cross-sections of the bundle $S$; $\nabla$ is uniquely defined by the metric \cite[Sec.~4.4]{PR84}.
The 1-particle Dirac equation in $(\sM,g)$ is then
\be\label{DEa}
\bigl( \I\gamma^{\mu}_x\nabla_{\mu}- q \gamma^\mu_x A_\mu(x) - m \bigr)\psi(x) = 0\,,
\ee
where $m\geq 0$ is called the mass of the particle and $q\in \RRR$ its charge. Finally, $S_x$ is equipped with an ``overbar'' operation $\psi\mapsto \overline{\psi}$, a conjugate-linear mapping from $S_x$ to its dual space $S_x^*$, and the Born distribution (``$|\psi|^2$ distribution'') on a spacelike surface $\Sigma$ is the measure given by
\be\label{Born1}
n_\mu(x) \, j^\mu(x) \, V(\D^3 x)
\ee
with the probability current 4-vector field
\be\label{jdef}
j^\mu(x) = \overline{\psi}(x) \, \gamma_x^\mu \, \psi(x) \,.
\ee

\subsubsection{Expression in Spherical Coordinates} \label{subsubsec:sphcoord}

The Dirac equation in the sRN space-time has been studied before in many works, e.g., \cite{CP82,Bel98,FSY00,BMM00,NJ16,KTZT20}. We adopt a widely used basis $b_x$ in spin space $S_x$ defined as follows: From the coordinate basis\footnote{We follow here the convention of identifying a tangent vector with the directional derivative operator in that direction.} $(\partial_t,\partial_r, \partial_\vartheta,\partial_\varphi)$ of $T_x\sM$, we obtain an orthonormal basis (Lorentz frame) $e_x$ by normalizing the vectors, 
\be\label{edef}
e_x=(e_x^0,e_x^1,e_x^2,e_x^3)=\bigl(A^{-1}\partial_t, A\partial_r,r^{-1}\partial_\vartheta, (r\sin \vartheta)^{-1}\partial_\varphi\bigr)\,.
\ee
To this orthonormal basis there corresponds a basis $b_x$ of $S_x$; the correspondence is canonical up to an overall sign which we choose continuously in $x$; $b_x$ is an orthonormal basis relative to the scalar product $\overline{\psi} \, \gamma^\mu(x) \, g_{\mu\nu}(x)\, n^\nu(x) \, \phi$ in $S_x$ associated with the surface $\{t=\mathrm{const.}\}$ or its future unit normal vector $n(x) = A^{-1} \partial_t$. Relative to the bases $e_x$ and $b_x$, the gamma matrices have their standard entries \cite{Tha92},
\be
\gamma^0 = \begin{pmatrix} I_2 & 0\\ 0 & -I_2 \end{pmatrix}\,,~~
\gamma^i = \begin{pmatrix} 0 & \sigma_i\\ -\sigma_i & 0 \end{pmatrix}
\ee
with $\sigma_i$ the $i$-th Pauli matrix. That is, in these bases the general-relativistic gamma matrices $\gamma_x^\mu$ reduce to the \emph{special-relativistic} gamma matrices, to which the symbol $\gamma^\mu$ will henceforth refer. Likewise, in the basis $b_x$, the overbar operation is represented in the same way as in any Lorentz frame, 
\be\label{bardef}
\overline{\psi}=\psi^\dagger \gamma^0\,.
\ee 

The Hilbert space $\Hilbert^{(1)}$ of 1-particle wave functions on $\Sigma=\{t=0\}$ can therefore be represented in coordinates as
\be\label{Hilbert1def}
\Hilbert^{(1)} = L^2\Bigl((0,\infty)\times \SSS^2,\CCC^{4}, A^{-1}r^2 \, \D r\, \D^{2}\vomega\Bigr)
\ee
with $\D^2\vomega = \sin\vartheta \, \D\vartheta \, \D\varphi$ and inner product
\be\label{scpcoo}
\langle \psi,\phi\rangle = \int_0^\infty \D r \int_{\SSS^2} \D^2 \vomega \, A^{-1}\, r^2 \, \psi(r,\vomega)^\dagger \, \phi(r,\vomega)\,.
\ee
(Note for comparison that $L^2$ of 3d Euclidean space is equivalent to $L^2((0,\infty)\times \SSS^2,r^2 \, \D r \, \D^2\vomega)$ in spherical coordinates.) Indeed, \eqref{scpcoo} follows from the general expression \eqref{scpdef}, as the Riemannian volume measure $V$ on $\Sigma$ has in general density $\left|\det {}^3g\right|^{1/2}$ in coordinates and is in this case given by $V(\D r \times \D^2 \vomega) =  A^{-1}r^2\,\D r\, \D^{2}\vomega$, while $n_\mu= (1,0,0,0)$ in the basis $e_x$. 

Correspondingly, the Born distribution is given in coordinates by
\be\label{Born2}
|\psi(r,\vomega)|^2 \, A^{-1}\, r^2\, \D r \, \D^2 \vomega\,,
\ee
where $|\psi|^2$ means $\psi^\dagger \psi$ or, equivalently, the sum of the absolute squares of the four complex components of $\psi$.

The Dirac equation on sRN space-time then reads in coordinates
\be
\I \partial_t \psi = H_1 \psi
\ee
with Hamiltonian \cite[Eq.\ between (2.7) and (2.8)]{CP82}\footnote{Cohen and Powers \cite{CP82} by mistake wrote $\tfrac{1}{2}\cos \theta$ for $\tfrac{1}{2}\cot \theta$ in that equation; when comparing, note also that they used the notation $\gamma^\mu$ for our $-\I\gamma^\mu$ and $q$ for our $-qQ$.}
\begin{align}
H_1 
&= -\I\alpha^1 A^2(\partial_r + r^{-1}+\tfrac{1}{2}A^{-1}A')
-\I\alpha^2 r^{-1}A(\partial_\vartheta+\tfrac{1}{2}\cot\vartheta)\nonumber\\
&\quad -\I\alpha^3 (r\sin\vartheta)^{-1} A\partial_\varphi
+ mA \beta+qQr^{-1},\label{DiracHamiltonian}
\end{align}
where $A'=\partial_r A$ is the derivative of $A$ and, as usual, $\beta=\gamma^0$ and $\alpha^i=\gamma^0 \gamma^i$. This operator is defined on $C_c^\infty((0,\infty)\times \SSS^2,\CCC^4)$, the space of smooth functions with compact support, which is a dense subspace of $\Hilbert^{(1)}$, and $H_1$ is, in fact, symmetric relative to \eqref{scpcoo} on this subspace, as will follow from Lemma~\ref{lem:unitarytransform} below or can be checked through direct computation.

\section{Main Results}
\label{sec:results}

In this section, we formulate our main results. 
Recall that the Hilbert space of our model is the mini-Fock space $\Hilbert=\Hilbert^{(0)}\oplus \Hilbert^{(1)}$ corresponding to 0 or 1 $y$-particles (see Section \ref{subsec:model} for the terminology of $x$- and $y$-particles) with $\Hilbert^{(0)}=\CCC$ and $\Hilbert^{(1)}$ given by \eqref{Hilbert1def}. 
The corresponding configuration space is
\be\label{confdef}
\Q= \Q^{(0)} \cup \Q^{(1)} = \{\emptyset\} \cup \Sigma\,. 
\ee
Here, $\emptyset$ is the 0-particle configuration and $\Sigma$ is any one of the $\{t=\mathrm{const.}\}$ surfaces; these surfaces can be identified with each other in a canonical way (as the mapping connecting points with equal $(r,\vartheta,\varphi)$ coordinates is an isometry) and represented in coordinates $(r,\vomega)$ as the Riemannian 3-manifold
\be
\Sigma=(0,\infty)\times \SSS^2
\ee
with the metric
\be
\D s^2 = A^{-2} \D r^2 + r^2 \D\vomega^2\,.
\ee
The Born distribution on $\Q$ for $\Psi\in\Hilbert$ is the measure $\PPP_\Psi$ with
\begin{subequations}\label{Born3}
\begin{align}
\PPP_\Psi(\{\emptyset\})
&= |\Psi^{(0)}|^2 \label{Born3a} \\
\PPP_\Psi(\D r \times \D^2 \vomega) 
&= |\Psi^{(1)}(r,\vomega)|^2 \, A^{-1} \, r^2 \, \D r\, \D^2\vomega\,. \label{Born3b}
\end{align}
\end{subequations}

\subsection{IBC Hamiltonian with Particle Creation}
\label{sec:thm1}

In order to formulate our first main result, the existence of the Hamiltonian, we use a certain orthonormal basis of $L^2(\SSS^2,\CCC^4,\D^2\vomega)$ traditionally denoted $\Phi^{\pm}_{m_j,\kappa_j}$, where $(m_j,\kappa_j)$ varies in the set
\be\label{sAdef}
\sA:= \Bigl\{(m_j,\kappa_j):\kappa_j \in \ZZZ\setminus\{0\},~ m_j+\tfrac{1}{2}\in\ZZZ,~ |m_j| \leq |\kappa_j|-\tfrac{1}{2} \Bigr\}\,.
\ee
Without going into details (see Appendix~\ref{app:Phi} or \cite[Sec.~4.6.4]{Tha92} for the definition), we remark that the $\Phi^{\pm}_{m_j\kappa_j}$ are the joint eigenfunctions of $\pmb{J}^{2}, J_z, K$, and $\beta$, viz.,
\begin{subequations}\label{Phieigenvectors}
\begin{align}
    \pmb{J}^{2}\Phi^{\pm}_{m_{j}\kappa_{j}} 
    &= j(j+1)\Phi^{\pm}_{m_{j}\kappa_{j}}\\
    J_{3}\Phi^{\pm}_{m_{j}\kappa_{j}} 
    &= m_{j}\Phi^{\pm}_{m_{j}\kappa_{j}}\\
    K\Phi^{\pm}_{m_{j}\kappa_{j}} 
    &= \kappa_{j}\Phi^{\pm}_{m_{j}\kappa_{j}}\\
    \beta \Phi^{\pm}_{m_j\kappa_j} 
    &= \pm \Phi^{\pm}_{m_j\kappa_j}
\end{align}
\end{subequations}
with $j=|\kappa_j|-\tfrac{1}{2}$, where (again without going into details) $\pmb{J}=\pmb{L}+\pmb{S}$ is the triple of angular momentum operators, $\pmb{L}$ the orbital angular momentum, $\pmb{S}$ the spin angular momentum, and $K = \beta (2 \pmb{S}\cdot\pmb{L}+1)$ the spin-orbit operator.

We also note for use in the IBC \eqref{ibc} that since the $\beta$ matrix has eigenvalues $\pm 1$, the projection to the eigenspace with eigenvalue $-1$ is $\tfrac{1}{2}(I-\beta)$.

We define a Hamiltonian $H$ in $\Hilbert$ for every choice of $(\widetilde{m}_j,\widetilde{\kappa}_j)\in\sA$ and of a coupling constant $g\in\CCC\setminus\{0\}$; $H$ acts 
on wave functions subject to the interior-boundary condition
\be\label{ibc}
\lim_{r\searrow 0} \tfrac{1}{2}(I-\beta)r^{1/2} \, \Psi^{(1)}(r,\vomega) = g\, |Q|^{-1/2} \Phi^-_{\widetilde{m}_j\widetilde{\kappa}_j}(\vomega) \, \Psi^{(0)}~~~~\forall\vomega\in\SSS^2
\ee
according to 
\begin{subequations}
\begin{align}
(H\Psi)^{(0)}&= g^* \, |Q|^{1/2}\lim_{r\searrow 0} \int_{\SSS^2}\!\! \D^2 \vomega \, \Phi^+_{\widetilde{m}_j\widetilde{\kappa}_j}(\vomega)^\dagger \, r^{1/2} \Psi^{(1)}(r,\vomega) \label{Hact0}\\
(H\Psi)^{(1)}(r,\vomega)&= H_1\Psi^{(1)}(r,\vomega)~~~(r>0)\label{Hact1}
\end{align}
\end{subequations}
with $H_1$ the Dirac Hamiltonian as in \eqref{DiracHamiltonian} and $g^*$ the conjugate of the complex constant $g$.

Here is the precise statement about the IBC Hamiltonian $H$:

\begin{theorem}[IBC Hamiltonian with particle creation]\label{theorem3.1}
For every choice of the parameters $(\widetilde{m}_j,\widetilde{\kappa}_j)\in\sA$ and  $g\in\CCC\setminus\{0\}$, there is a self-adjoint operator $H$ with domain $D\subset\Hilbert$ such that 
\begin{enumerate}
\item For every $\Psi\in D$, the upper sector is of the form
\be \label{eq:asexp}
\Psi^{(1)}(r,\vomega) = f(\vomega) \, r^{-1/2} + \mathcal{O}(r^{1/2})
\ee
as $r\to 0$ for some (uniquely determined, $\Psi$-dependent) $f\in L^2(\SSS^2,\CCC^4,\D^2\vomega)$. In particular, the limit on the left-hand side of \eqref{ibc} exists and is the part of $f$ in the eigenspace of $\beta$ with eigenvalue $-1$.
\item Every $\Psi\in D$ satisfies the IBC \eqref{ibc}.
\item For every $\Psi^{(1)}\in C_c^\infty((0,\infty)\times \SSS^2, \CCC^4)$, $(0,\Psi^{(1)})\in D$, and $H(0,\Psi^{(1)})=(0,H_1 \Psi^{(1)})$ with $H_1$ as in \eqref{DiracHamiltonian}. Put differently, $(H,D)$ is a self-adjoint extension of $(H_1^0,D^0)$ with $D^0= \{0\}\oplus C_c^\infty((0,\infty)\times \SSS^2, \CCC^4)$ and $H_1^0(0, \psi)=(0,H_1\psi)$.
\item The $0$-particle action of $H$ is given by \eqref{Hact0}, which holds in the precise sense that
\be
(H\Psi)^{(0)}= g^* |Q|^{1/2} \langle \Phi^+_{\widetilde{m}_j\widetilde{\kappa}_j},f\rangle_{L^2(\SSS^2,\CCC^4,\D^2\vomega)} \,. 
\ee
\item Particle creation occurs, i.e., $H$ is not block diagonal in the decomposition $\Hilbert^{(0)}\oplus \Hilbert^{(1)}$.
\end{enumerate}
\end{theorem}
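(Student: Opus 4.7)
The plan is to construct $H$ as an explicit self-adjoint extension of $H_1^0$ in $\Hilbert=\Hilbert^{(0)}\oplus\Hilbert^{(1)}$, leveraging the known failure of essential self-adjointness of $H_1$ on $\Hilbert^{(1)}$ due to the naked singularity \cite{CP82}. First I would decompose $\Hilbert^{(1)}$ along the basis $\{\Phi^\pm_{m_j\kappa_j}\}_{(m_j,\kappa_j)\in\sA}$: since $H_1$ commutes with $\vJ^2$ and $J_3$, it reduces to a direct sum of radial operators, one per angular sector. The main input from \cite{CP82} is that each of these radial operators is in the limit-circle case at $r=0$, with two linearly independent local solutions of $H_1\psi=\lambda\psi$ asymptotically behaving like $r^{-1/2}$ and $r^{1/2}$. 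Consequently, every element of the maximal (adjoint) domain admits the asymptotic expansion \eqref{eq:asexp} with uniquely determined $f\in L^2(\SSS^2,\CCC^4,\D^2\vomega)$, the $\beta=\pm 1$ components of $f$ being independent parameters that parametrize the deficiency subspace. This yields item~1 of the theorem and identifies the left-hand side of \eqref{ibc} with $\tfrac{1}{2}(I-\beta)f$.

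Next I would compute the boundary form $B(\Psi,\Phi):=\langle \Psi^{(1)},H_1\Phi^{(1)}\rangle - \langle H_1\Psi^{(1)},\Phi^{(1)}\rangle$ by integrating by parts against the measure $A^{-1}r^2\,\D r\,\D^2\vomega$. Only the radial piece of $H_1$ from \eqref{DiracHamiltonian} produces a nonvanishing boundary term, and only at $r=0$ (the $r=\infty$ term vanishes by asymptotic flatness and $L^2$-integrability, while the angular and $\beta,r^{-1}$ pieces are manifestly symmetric). Using $A(r)\sim |Q|/r$ near $r=0$, the expansion \eqref{eq:asexp}, and $\{\alpha^1,\beta\}=0$, one obtains (up to a conventional sign)
\begin{equation*}
B(\Psi,\Phi) = -\I\,|Q|\int_{\SSS^2}\D^2\vomega\;f_\Psi(\vomega)^\dagger\,\alpha^1\,f_\Phi(\vomega),
\end{equation*}
which pairs the $\beta=+1$ part of one argument with the $\beta=-1$ part of the other in each angular sector separately, because $\alpha^1$ acts as radial $\sigma\cdot\hat r$ and maps $\Phi^+_{m_j\kappa_j}$ to a constant multiple of $\Phi^-_{m_j\kappa_j}$. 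Defining $D$ by \eqref{eq:asexp} together with the IBC \eqref{ibc}, interpreted so that $f^-_\Psi$ vanishes in every sector other than $(\widetilde{m}_j,\widetilde{\kappa}_j)$, and defining $H$ by \eqref{Hact0}--\eqref{Hact1}, the full form
\begin{equation*}
B(\Psi,\Phi) + \overline{\Psi^{(0)}}\,(H\Phi)^{(0)} - \overline{(H\Psi)^{(0)}}\,\Phi^{(0)}
\end{equation*}
vanishes identically on $D\times D$: the $0$-sector coupling is engineered to cancel the boundary pairing in the coupled sector, while the other sectors contribute nothing since their $\beta=-1$ coefficients vanish. This proves items~2--5 and the symmetry of $(H,D)$; fixing the prefactor $g^*|Q|^{1/2}$ in \eqref{Hact0} to be precisely the complex conjugate of the IBC prefactor in \eqref{ibc} (times the right imaginary unit from $B$) is exactly what makes the cancellation work.

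The core of the proof is then self-adjointness, i.e., $D(H^*)\subseteq D$. For $\Phi\in D(H^*)$, $\Phi^{(1)}$ lies in the maximal domain and hence admits \eqref{eq:asexp}; requiring $B+(\text{coupling})$ to vanish against every $\Psi\in D$ forces $f_\Phi^-$ to satisfy the IBC with parameter $\Phi^{(0)}$ and to vanish in every non-coupled sector, so $\Phi\in D$. The main obstacle is the underlying density step: one must exhibit enough $\Psi\in D$ with freely prescribable $\Psi^{(0)}\in\CCC$ and $\beta=+1$ coefficient $f_\Psi^+\in L^2(\SSS^2,\CCC^4,\D^2\vomega)$, while $f_\Psi^-$ is fixed by the IBC. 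I would do this by building explicit witnesses out of the two local solutions from \cite{CP82}: multiply each by a radial cutoff supported near $r=0$ with prescribed angular profile in $f_\Psi^\pm$, and glue in a bulk piece in $C_c^\infty((0,\infty)\times\SSS^2,\CCC^4)$, adjusting $\Psi^{(0)}$ to match the IBC. Verifying that such constructions indeed land in the maximal domain, with the two coefficients of the asymptotic expansion freely prescribable, is the technically delicate point; once it is in place, self-adjointness follows, and item~5 (non-block-diagonality) is immediate from $g\neq 0$.
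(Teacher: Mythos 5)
Your proposal follows essentially the same route as the paper: decompose $\Hilbert^{(1)}$ by angular momentum using the $\Phi^\pm_{m_j\kappa_j}$, invoke Cohen--Powers for the boundary asymptotics and for self-adjoint extensions of the radial operators, couple one chosen sector to the vacuum via the IBC while sealing all others with a boundary condition setting $f^-=0$ (the paper's $\theta=0$), and establish self-adjointness by showing the boundary form cancels against the vacuum coupling, exactly as the paper does (following \cite{HT20}) with the more general $a_1,\dots,a_4$ parametrization of Theorem~\ref{thm:thmref}. One small imprecision worth noting: both local solutions of the first-order $2\times 2$ radial Dirac system behave like $r^{-1/2}$ near $r=0$ (with independent $\beta=+1$ and $\beta=-1$ profiles), rather than one like $r^{-1/2}$ and one like $r^{1/2}$ as you state; the paper makes this transparent by passing to the tortoise coordinate $R\sim r^3$ where $\phi = r\,A^{1/2}\,\Psi^{(1)}$ is continuous at $R=0$ with error $\mathcal{O}(R^{1/3})$, but this does not affect your argument since the conclusion you actually draw (a two-parameter boundary-value space with $f^\pm$ as independent coordinates, hence deficiency index one in each sector) is correct and matches the paper.
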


Theorem \ref{theorem3.1} will follow as a special case of the slightly reformulated and more general Theorem \ref{thm:thmref}, formulated in Section \ref{subsec:thmref}. We give the proof of Theorem \ref{thm:thmref} in Section~\ref{sec:pfthm1}.

\begin{remark}[Boundary conditions for the Dirac equation]
While a boundary condition for the Laplacian usually specifies the value of $\psi$ on the boundary (as in a Dirichlet boundary condition) or its normal derivative (as in a Neumann boundary condition), boundary conditions for the Dirac equation usually specify two of the four components of the wave function on the boundary, leaving the other two unspecified (e.g., \cite{FR15}). Likewise, except for the scaling factor $r$ (which has to do with how to extend the bundle $S$ to the boundary \cite[Sec.~5.3]{timelike}), \eqref{ibc} specifies two of the four components of $\Psi^{(1)}$ at $r=0$ (those in the eigenspace of $\beta$ with eigenvalue $-1$), leaving the other two unspecified (those in the eigenspace of $\beta$ with eigenvalue $+1$).
\end{remark}

\begin{remark}[Comparison to \cite{timelike}]\label{rem:timelike1} 
In \cite{timelike}, one of us conjectured what a Hamiltonian on a sRN space-time with an IBC at the singularity and the corresponding Bohm-Bell process might look like. The description there was based on plausibility rather than rigorous analysis, but gets qualitatively confirmed by Theorem~\ref{theorem3.1} above. Since our proof technique for Theorem~\ref{theorem3.1} makes use of the angular momentum eigenspaces $\Kilbert_{m_j\kappa_j}$ spanned by $\Phi^+_{m_j\kappa_j}$ and $\Phi^-_{m_j\kappa_j}$, while the IBC and Hamiltonian in \cite{timelike} were not related to these subspaces, the $H$ provided by Theorem~\ref{theorem3.1} is not the same as the one described in \cite{timelike}, and we cannot answer whether the equations in \cite{timelike} for the IBC and the action of the Hamiltonian do or do not define a self-adjoint operator. But the $H$ of Theorem~\ref{theorem3.1} is similar to the one described in \cite{timelike} in that (i)~the IBC \eqref{ibc}, just as \cite[(52)]{timelike}, concerns two components of the limiting values of $\Psi^{(1)}$ on the singularity, rescaled by $r^{1/2}$, and requires them to be $\Psi^{(0)}$ times a certain spinor function of $\vomega$; (ii)~the expression \eqref{Hact0} for $(H\Psi)^{(0)}$, just as \cite[last line of (53)]{timelike}, is the inner product over $\SSS^2$ of the rescaled $\Psi^{(1)}$ at the singularity with another spinor function of $\vomega$; and (iii)~$H$ acts like the Dirac Hamiltonian away from the singularity.
\end{remark}

\begin{remark}[Comparison to \cite{HT20}]\label{rem:HT20}
In \cite[Thm.~6]{HT20}, two of us proved the existence of a self-adjoint IBC Hamiltonian in flat Minkowski space-time under the assumption of a sufficiently strong Coulomb potential acting on the quantum particle. Some elements of the construction and the proofs were similar; some differences are that the asymptote \eqref{eq:asexp} of $\Psi^{(1)}$ as $r\to 0$ had a different form involving different powers of $r$, thus requiring a different power of $r$ in the IBC; that only few choices of $\widetilde{m}_j,\widetilde{\kappa}_j$ worked; and the IBC involved a different spinor field instead of $\Phi^-_{\widetilde{m}_j\widetilde{\kappa}_j}$.
\end{remark}

\begin{remark}[Full Fock space]\label{rem:Fock}
Along the lines of \cite{ibc2a}, our construction could be extended to full Fock space $\Fock$ with an arbitrary number $n\in\{0,1,2,3,\ldots\}$ of $y$-particles. For each value $\tau\in\RRR$ of the time coordinate $t$, let $\Sigma_\tau:=\{t=\tau\}$ and the configuration space be $\Q_\tau:=\bigcup_{n=0}^\infty \Sigma_\tau^n$. The boundary of configuration space consists of those configurations with at least one $y$-particle at $r=0$, and the IBC will relate the $n$-particle sector $\psi^{(n)}$ of $\psi\in\Fock$ to the values of $\psi^{(n+1)}$ on the boundary.
\end{remark}

\begin{remark}[Multi-time wave functions]\label{rem:multitime}
In relativistic space-time, it is usually possible and of interest to extend the domain of definition of wave functions so as to make them \emph{multi-time} wave functions \cite{LPT20,Lienert_2019}, i.e., defined not only for simultaneous $n$-particle configurations but for any spacelike $n$-particle configuration or even any $n$-particle configuration at all. This is also possible for the present model, including states of arbitrary particle number $n$ as in Remark~\ref{rem:Fock}, but the $x$-particle, serving as the source at the singularity, needs to be taken into account: while it cannot occupy other locations than the origin $r=0$, it should be given its own time variable $t_x$ in a multi-time approach, leading to wave functions of the form
\be\label{multitime}
\psi^{(n)}(t_x,t_1,r_1,\vartheta_1,\varphi_1,\ldots, t_n,r_n,\vartheta_n,\varphi_n)\,,
\ee
where $t_j,r_j,\vartheta_j,\varphi_j$ are the coordinates of the $j$-th $y$-particle. Since for multi-time wave functions, the space-time points of two interacting particles need to be spacelike separated, each $(t_j,r_j,\vartheta_j,\varphi_j)$ is constrained to the region spacelike from $(t_x,r=0)$ (shaded in Figure~\ref{fig:spacelike}). In fact, for any $n$ points in this region, the function \eqref{multitime} is uniquely determined from the $n$-particle wave function on $\Sigma_{t_x}$ (provided by the single-time evolution) as the solution of the free Dirac equation in each $(t_j,r_j,\vartheta_j,\varphi_j)$ away from the singularity.
\end{remark}

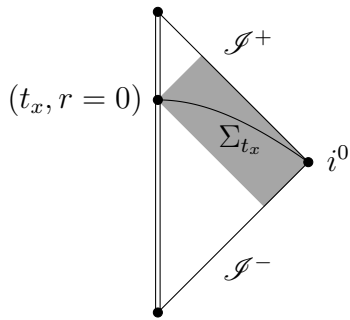
\begin{figure}[h]
\begin{center} 
  \begin{tikzpicture}
  \draw[fill=gray!70!white,draw=gray!70!white] (0,0.8284) -- (1.4142,-0.5858) -- (2,0) -- (0.5858,1.4142) -- cycle;
  \draw [domain=0:2, samples=50] plot (\x,{-2*(sqrt(1+\x*\x/4)-sqrt(2))});
  \draw (0.03,2) -- (0.03,-2);
  \draw (-0.03,2) -- (-0.03,-2);
  \draw (0,2) -- (2,0) -- (0,-2);
  \filldraw (0,0.8284) circle [radius=0.06];
  \filldraw (0,-2) circle [radius=0.06];
  \filldraw (0,2) circle [radius=0.06];
  \filldraw (2,0) circle [radius=0.06];
  \node at (-1.1,0.8284) {$(t_x,r=0)$};
  \node at (2.4,0.05) {$i^0$};
  \node at (1.2,1.6) {$\mathscr{I}^+$};
  \node at (1.2,-1.4) {$\mathscr{I}^-$};
  \node at (1.1,0.3) {$\Sigma_{t_x}$};
  \end{tikzpicture}
\end{center}
 \caption{Penrose conformal diagram of sRN space-time $\sM$, shown with the spacelike coordinate surface $\Sigma_{t_x}=\{t=t_x\}$ bordering on the point $(t_x,r=0)$ on the singularity $\partial \sM=\{r=0\}$ (shown as the vertical double line); the value of $t_x$ was chosen arbitrarily; $\mathscr{I}^{\pm}$ is the future (past) null infinity, $i^0$ is the spacelike infinity, and the shaded region comprises the points spacelike separated from $(t_x,r=0)$.}\label{fig:spacelike}
\end{figure}

\subsection{Bohmian Trajectories and Bohm-Bell Jump Process} \label{subsec:jump}

In Theorem \ref{theorem3.1}, we constructed a self-adjoint Hamiltonian involving the creation of Dirac particles at the sRN singularity using an IBC \eqref{ibc}. In this section, we construct a Markov process $Q_t$ (the ``Bohm-Bell process'') in the configuration space $\Q$ as in \eqref{confdef} that is Born (``$|\Psi_t|^2$'') distributed at every $t\in\RRR$. Our approach is analogous to ``Bell-type quantum field theory'' \cite{Bell86} in which motion of the configuration along deterministic trajectories are interrupted by stochastic jumps. That is, $Q_t$ follows Bohmian trajectories between the jumps, and the latter correspond to the creation/annihilation of particles. A key element of this construction is to determine the rate of particle creation that ensures equivariance of the process (i.e., preservation of the Born distribution), and to this end it is relevant to determine the asymptotic Bohmian trajectories near the singularity for this Hamiltonian.

\subsubsection{Bohmian Trajectories} 
\label{subsubsec:trajectories}

We now review the definition of the Bohmian trajectories and determine the coordinate form of their equation for our setup.

The Bohmian trajectories $X:\RRR\to\sM$ are solutions to Bohm's equation of motion \cite{Bohm52} for the Dirac equation \cite{Bohm53}, given by 
\begin{equation} \label{eq:BohmEM}
\frac{\D X^\mu}{\D s}\propto j^\mu(X(s))\,,
\end{equation}
where $s$ is any curve parameter and $j^\mu$ given by \eqref{jdef}. In words, the world line is everywhere tangent to the vector field $j^\mu$.

We now want to express the equation of motion in coordinates. A subtle point is that there are two relevant bases in the tangent space $T_x\sM$ in which $j(x)$ can be expressed: the coordinate basis $(\partial_t,\partial_r,\partial_\vartheta,\partial_\varphi)$ and the basis $e_x$ of \eqref{edef} (the normalized coordinate basis). We write $(j^t,j^r,j^\vartheta,j^\varphi)$ for the components of $j(x)$ relative to the former and $(j^0,j^1,j^2,j^3)$ for those relative to the latter,
\begin{subequations}
\begin{align}
j(x) &= j^t\partial_t +j^r\partial_r +j^\vartheta\partial_\vartheta +j^\varphi\partial_\varphi\\
j(x) &= j^0e^0_x + j^1 e^1_x + j^2e^2_x + j^3 e^3_x\,.
\end{align}
\end{subequations}
One can read off from \eqref{edef} that $j^t=j^0A^{-1}$, $j^r=j^1A$, $j^\vartheta=j^2r^{-1}$, and $j^\varphi=j^3(r\sin \vartheta)^{-1}$. Since the world line $X$ is tangent to the vector field $j$ on $\sM$, the image of the world line in coordinate space with axes $t,r,\vartheta,\varphi$ is tangent to the image of $j$, which has components $(j^t,j^r,j^\vartheta,j^\varphi)$. Therefore, \eqref{eq:BohmEM} reduces to
\begin{equation} \label{eq:Bohmcoo1}
\frac{\D}{\D t} \begin{pmatrix} r(t)\\ \vartheta(t) \\ \varphi(t) \end{pmatrix} 
=\vv\bigl(t,r(t),\vartheta(t),\varphi(t)\bigr)
\end{equation}
with
\begin{subequations}\label{Bohmcoo2}
\begin{align}
v^1&= \frac{j^r}{j^t}= \frac{j^1A}{j^0A^{-1}} = A^2 \frac{(\Psi^{(1)})^\dagger \alpha^1 \Psi^{(1)}}{|\Psi^{(1)}|^2}\\
v^2&= \frac{j^\vartheta}{j^t}= \frac{j^2r^{-1}}{j^0A^{-1}} = \frac{A}{r}\, \frac{(\Psi^{(1)})^\dagger \alpha^2 \Psi^{(1)}}{|\Psi^{(1)}|^2}\\
v^3&= \frac{j^\varphi}{j^t}= \frac{j^3(r\sin\vartheta)^{-1}}{j^0A^{-1}} = \frac{A}{r\sin\vartheta}\, \frac{(\Psi^{(1)})^\dagger \alpha^3 \Psi^{(1)}}{|\Psi^{(1)}|^2}\,.
\end{align}
\end{subequations}

\subsubsection{Asymptotics of Bohmian Trajectories} 

We now determine the asymptotic form of the trajectories just before reaching (or after emanating from) the singularity.
In the following, we assume that
\be
\widetilde{\kappa}_j=\pm 1\,,
\ee
and we will only consider wave functions $\Psi$ lying in a certain subspace $\widehat{D}\subset \Hilbert$ which is invariant under the time evolution generated by $H$. More precisely, we take $\widehat{D}$ to be the part of the domain $D$ of $H$ whose $1$-particle component $\Psi^{(1)}$  has angular momentum corresponding to the chosen $(\widetilde{m}_j, \widetilde{\kappa}_j) \in \sA$. In the notation of Sections~\ref{subsec:Rcoord} and \ref{proof1},
\be\label{hatDdef}
\widehat{D}:=(1\oplus U^{-1}) \widehat{D}_{\widetilde{m}_j\widetilde{\kappa}_j}
\ee
involving the unitary transformation $U$ as in \eqref{unitarytransform} and the subspace $\widehat{D}_{\widetilde{m}_j\widetilde{\kappa}_j}$ as in \eqref{eq:hatD}.
As becomes apparent from the proof of Theorem \ref{theorem3.1} in Section \ref{chapter3}, the coupling between $\Hilbert^{(0)}$ and $\Hilbert^{(1)}$ happens only within $\widehat{D}$, hence $\widehat{D}$ is the most relevant or interesting part of $D$. Thus, by focusing on $\widehat{D}$, we avoid unnecessarily tedious computation for extracting the qualitative behavior, which we believe will not change much for general $\Psi\in D \setminus \widehat{D}$; cf.~\cite{HT22}.

Moreover, in the following asymptotic analysis of the Bohmian trajectories, we will also make use of a (plausible and common \cite{bohmibc, HT22}) approximation for Bohm's equation of motion: We assume that the Bohmian velocity field $\vv$ as in \eqref{Bohmcoo2} varies slowly in time. More specifically, we assume that for times $t$ close to the reference time $t_0 \in \RRR$, the asymptotics of the true Bohmian trajectories as solutions of \eqref{eq:Bohmcoo1} coincide (to leading order) with those one would obtain from a time-independent velocity field $\vv(t_0,\cdot)$, i.e., with solutions of 
\begin{equation} \label{eq:BohmEMapp}
\frac{\D}{\D t} \begin{pmatrix} r(t)\\ \vartheta(t) \\ \varphi(t) \end{pmatrix} 
=\vv\bigl(t_0,r(t),\vartheta(t),\varphi(t)\bigr) \,.
\end{equation}
This approximation corresponds to approximating $\Psi_t \approx \Psi_{t_0}$ in a suitable topology; see \cite[Remark~1]{HT22} for a possible general strategy of rigorously justifying it. 

This is our main result on the asymptotics of the Bohmian trajectories.

\begin{prop}[Asymptotics of Bohmian trajectories]  \label{prop:2}
	Let $\widetilde{\kappa}_j=\pm 1$, $\Psi_0 \in \widehat{D}$, denote the time-evolved state by $\Psi_t := \E^{-\I Ht} \Psi_0 \in \widehat{D}$ and write 
	\be\label{cdef}
	c_{\pm}(t)=|Q|^{1/2}\langle \Phi^{\pm}_{\widetilde{m}_j\widetilde{\kappa}_j},f_t \rangle= |Q|^{1/2} \lim_{r\searrow 0}\int_{\SSS^2}\D^2\vomega \, \Phi^{\pm}_{\widetilde{m}_j\widetilde{\kappa}_j}(\vomega)^\dagger \, r^{1/2}\, \Psi^{(1)}_t(r,\vomega)\,,
	\ee
	where $f_t$ is the analog of $f$ from \eqref{eq:asexp} obtained from $\Psi_t$. 
	Let $t_0 \in \RRR$ be any time for which
	\be\label{assumption_c}
	\Im[c_-^*(t_0) \, c_+(t_0) ] \neq 0 
	\ee
	and abbreviate $c_{\pm}:= c_{\pm}(t_0)$. 
	
	Then the solution to \eqref{eq:BohmEMapp} with $r(t_0) = 0$, i.e., the trajectories emanating from/reaching the singularity at time $t_0$, occur only if $\Im[c_-^* c_+] < 0$ (resp. $\Im[c_-^* c_+] >0$) and they obey for $t>t_0$ (resp. $t<t_0$) the following asymptotics as $t\to t_0$:
	\begin{subequations} \label{eq:asymp}
		\begin{align}
			&r(t) = C_\mathrm{rad} \, |t-t_0|^{1/3} + \mathcal{O}\Bigl(|t-t_0|^{2/3} \Bigr) \label{eq:r(t)}
			\\[2mm]
			&\vartheta(t) =  \vartheta_{0} + \mathcal{O}\Bigl( |t-t_0|^{2/3} \Bigr) \\[2mm] 
			&\varphi(t) =  \varphi_0 + \mathrm{sgn}(t-t_0) \, C_\mathrm{az} \, |t-t_0|^{1/3} +  \mathcal{O}\Bigl( |t-t_0|^{2/3} \Bigr) \label{eq:phi(t)}
		\end{align}
	\end{subequations}
	for some constants $\vartheta_0 \in [0,\pi]$ and $\varphi_0 \in [0,2\pi)$ and with coefficients
	\begin{subequations}\label{Cdef}
	\begin{align}
	C_\mathrm{rad}&= \Biggl(\frac{6Q^2 \,  |\Im [c_-^*c_+]|}{|c_+|^2 + |c_-|^2}\Biggr)^{\! 1/3}\\
	C_\mathrm{az}&= \frac{6^{1/3}\mathrm{sgn}(Q\widetilde{m}_j \widetilde{\kappa}_j) \, \Re[c_-^*c_+]}{|Q|^{1/3}(|c_+|^2+|c_-|^2)^{1/3}  |\Im [c_-^*c_+]|^{2/3}}\,.
	\end{align}
	\end{subequations}
	Moreover,
	\be\label{phi(r)}
	\varphi(r)=\varphi_0-\frac{\mathrm{sgn}(\widetilde{m}_j \widetilde{\kappa}_j)}{Q} \frac{\Re[c_-^{*}c_+]}{\Im [c_-^*c_+]} \: r+ \mathcal{O}(r^2)
	\ee
	as $r\to 0$.
\end{prop}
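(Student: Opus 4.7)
The plan is to exploit the asymptotic expansion \eqref{eq:asexp} of $\Psi^{(1)}$ as $r\to 0$, together with the near-singularity behavior $A(r) = |Q|/r + \mathcal{O}(1)$, to extract the leading form of the Bohmian velocity field \eqref{Bohmcoo2} and then integrate the resulting approximate ODE \eqref{eq:BohmEMapp}. Since $(\Psi^{(1)})^\dagger \alpha^i \Psi^{(1)}$ and $|\Psi^{(1)}|^2$ both scale like $r^{-1}$ times an angular factor (from $\Psi^{(1)} \approx f(\vomega)\,r^{-1/2}$), their ratio has a finite angular limit $f^\dagger \alpha^i f/|f|^2$. Combined with the prefactors $A^2$, $A/r$, $A/(r\sin\vartheta)$ (each $\sim 1/r^2$), this gives
\[
v^i = \frac{G_i(\vomega)}{r^2}\bigl(1 + \mathcal{O}(r)\bigr),
\]
where the $G_i$ are angular functions depending on $f(\vomega) = c_+\Phi^+_{\widetilde{m}_j\widetilde{\kappa}_j}(\vomega) + c_-\Phi^-_{\widetilde{m}_j\widetilde{\kappa}_j}(\vomega)$ (restriction to $\widehat{D}$ has reduced $f$ to this form).

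The central step is the computation of the $G_i$. Because $\Phi^+$ and $\Phi^-$ occupy orthogonal $\beta$-eigenspaces and $\alpha^i$ is block off-diagonal, only cross terms survive: $f^\dagger \alpha^i f = 2\Re[c_-^* c_+\,\varphi_-^\dagger \sigma_i \varphi_+]$, where $\varphi_\pm$ denote the nontrivial $2$-spinor blocks of $\Phi^\pm$. Standard identities for spin-spherical harmonics (cf.~Appendix~\ref{app:Phi}) relate $\varphi_-$ to $\vsigma\cdot\hat{r}$ acting on $\varphi_+$, allowing all three bilinears $\varphi_-^\dagger \sigma_i \varphi_+$ to be written in terms of $\varphi_+$ alone; in particular $|\varphi_-| = |\varphi_+|$, so $|f|^2 = (|c_+|^2 + |c_-|^2)|\varphi_+|^2$. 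The remarkable simplification special to $|\widetilde{\kappa}_j| = 1$ (so $j = 1/2$) is that the angular dependence of the three ratios then collapses: the $\vomega$-dependence of $|\varphi_+|^2$ in the denominator cancels the angular factors in the numerator of $G_1$; the $\sin\vartheta$ prefactor in $v^3$ cancels a matching $\sin\vartheta$ arising from $\varphi_-^\dagger \sigma_3 \varphi_+$; and the leading contribution to $G_2$ vanishes identically. The outcome is
\[
G_1 = \frac{-2Q^2\,\Im[c_-^* c_+]}{|c_+|^2 + |c_-|^2}, \qquad G_2 = 0, \qquad G_3 = \frac{2|Q|\,\mathrm{sgn}(Q\widetilde{m}_j\widetilde{\kappa}_j)\,\Re[c_-^* c_+]}{|c_+|^2 + |c_-|^2},
\]
where the $\mathrm{sgn}(Q)$ enters through a phase carried by the asymptotic expansion \eqref{eq:asexp} itself.

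With the $G_i$ determined, the three ODEs decouple and integrate in closed form. The radial equation $r^2\dot r = G_1$ admits a solution with $r(t_0) = 0$ only when $\mathrm{sgn}(G_1) = \mathrm{sgn}(t - t_0)$, forcing the stated dichotomy $\Im[c_-^* c_+] < 0$ (emission, $t > t_0$) versus $> 0$ (absorption, $t < t_0$); integration yields $r(t)^3 = 3|G_1|\,|t-t_0|$, i.e.~\eqref{eq:r(t)} with $C_\mathrm{rad} = (3|G_1|)^{1/3}$. Substituting $r(t)$ into $\dot\varphi = G_3/r^2$ and integrating gives \eqref{eq:phi(t)} with $C_\mathrm{az} = 3\,G_3/C_\mathrm{rad}^2$, which algebraically matches the expression in \eqref{Cdef}. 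Since $G_2 = 0$, the $\vartheta$-equation contains only subleading contributions, yielding $\vartheta(t) = \vartheta_0 + \mathcal{O}(|t-t_0|^{2/3})$. Finally, $\D\varphi/\D r = v^3/v^1 = G_3/G_1 + \mathcal{O}(r)$ is $r$-independent to leading order, so $\varphi(r) = \varphi_0 + (G_3/G_1)\,r + \mathcal{O}(r^2)$, matching \eqref{phi(r)}.

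The main obstacle is the middle step: verifying that for $|\widetilde{\kappa}_j| = 1$ the full $\vomega$-dependence in the three ratios $f^\dagger \alpha^i f/|f|^2$ collapses so completely. This rests on the very particular structure of the $j = 1/2$ spin-spherical harmonics given in Appendix~\ref{app:Phi}, together with careful bookkeeping of the rotation between the Cartesian spin basis (in which $\Omega_{m,\kappa}$ is usually written) and the tetrad spin frame $b_x$ of Section~\ref{subsubsec:sphcoord} (in which $\alpha^i$ takes its standard matrix form). A secondary issue is the justification of the time-independent velocity approximation \eqref{eq:BohmEMapp}; we would handle this as in \cite[Remark~1]{HT22}, using continuity of $t \mapsto \Psi_t$ near $t_0$.
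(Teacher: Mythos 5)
Your proposal follows essentially the same route as the paper: substitute the leading asymptotics $\Psi^{(1)} \sim (c_+\Phi^+_{\widetilde m_j\widetilde\kappa_j} + c_-\Phi^-_{\widetilde m_j\widetilde\kappa_j})\,r^{-1/2}/|Q|^{1/2}$ into the velocity field \eqref{Bohmcoo2}, reduce the angular bilinears to $\vomega$-independent constants using the $j=\tfrac12$ structure of the $\Phi^\pm$, and then integrate the resulting separable ODE system backward from $r(t_0)=0$. The paper packages the bilinear computation as Proposition~\ref{prop:1} (giving the current components $j^\mu$ explicitly through \eqref{proofprop1}--\eqref{proofprop3}), arrives at the same reduced system \eqref{Bohmcoo3} that you call the $G_i$, and finishes by the same separation-of-variables argument, including the dichotomy on $\mathrm{sgn}\,\Im[c_-^*c_+]$.
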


The proof is given in Section~\ref{sec:pfprop2}. Note that the denominators in \eqref{Cdef} and \eqref{phi(r)} are nonzero by \eqref{assumption_c} and \eqref{supercritical}.

\begin{figure}[h]
\begin{center}
	\begin{tikzpicture}[scale=1.4]
	\draw (0,3) node[anchor=north east] {$r$};
	\draw (2,0) node[anchor=north east] {$\varphi$};
	\filldraw[fill=gray!40!white, draw=gray!40!white] (0,0) rectangle (1.5,2.5);
	\draw[thick] (0,2.5) -- (0,0) -- (1.5,0) -- (1.5,2.5);
	\draw[->] (0,0) -- (0,3);
	\draw[->] (0,0) -- (2,0);
	\filldraw (1,0) circle (0.5mm);
	\draw (1,0) to[out=130,in=300] (0.5,0.6) to[out=120,in=250] (0.8,2.5);
	\end{tikzpicture}
	~~~~~
	\begin{tikzpicture}
	\draw[->] (-2,0) -- (2,0);
	\draw[->] (0,-2) -- (0,3);
	\node at (1.8,-0.2) {$x$};
	\node at (-0.2,2.8) {$y$};
	\draw plot[domain=0:460, samples=200] (\x:{0.004*\x});
	\filldraw (0,0) circle (0.3mm);
	\end{tikzpicture}
	~~~~~
	\begin{tikzpicture}[scale=1.4]
	\draw[->] (-1.5,0) -- (1.5,0);
	\draw[->] (0,-0.5) -- (0,2);
	\node at (1.3,-0.2) {$x$};
	\node at (-0.2,1.8) {$z$};
	\draw plot[domain=0:10, samples=200] ({0.1*cos(0.5*\x r)*\x},{0.1*\x});
	\draw[dashed] plot[domain=-1.5:25, samples=200] ({-0.9*exp(-0.2*\x)},{0.9*exp(-0.2*\x)});
	\draw[dashed] plot[domain=-1.5:25, samples=200] ({0.9*exp(-0.2*\x)},{0.9*exp(-0.2*\x)});
	\filldraw (0,0) circle (0.3mm);
	\end{tikzpicture}
\end{center}
\caption{Illustrated is a Bohmian trajectory shortly before/after absorption/emission, asymptotically obeying \eqref{eq:asymp}; the figure is analogous to \cite[Figure~3]{HT22} but shows quite different behavior. LEFT: Drawn in spherical coordinates, with only the azimuthal angle $\varphi$ shown; to leading order near $r=0$, $\varphi(r)=\varphi_0 + Cr$ as in \eqref{phi(r)}; the dot marks $(r=0,\varphi_0)$. 
MIDDLE: The curve $\varphi(r)=\varphi_0 + Cr$ drawn in 2d cartesian coordinates. RIGHT: The curve $\varphi(r)=\varphi_0+Cr$, $\vartheta=\vartheta_0$ drawn in 3d cartesian coordinates, seen along the $y$-axis. Dashed is the cone $\{\vartheta = \vartheta_0\}$.}
\label{fig:infcirc}
\end{figure}
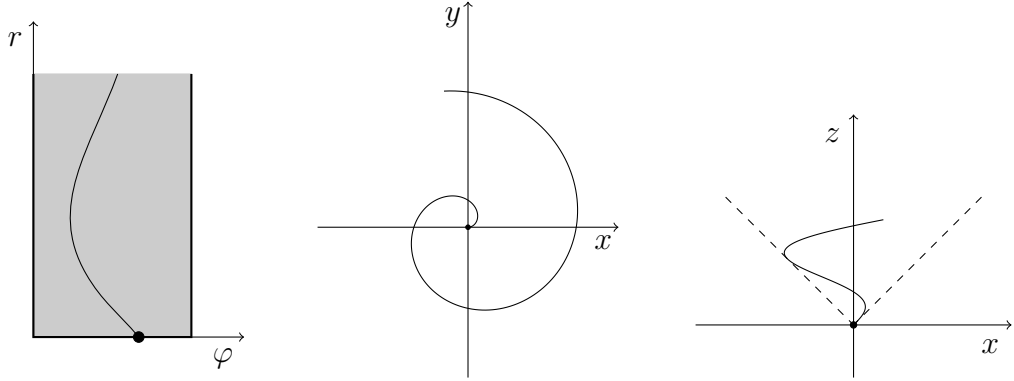

\subsubsection{Bohm-Bell Jump Process} \label{subsec:bohmbell}

We now give the definition of the Bohm-Bell jump process $(Q_t)_{t\geq 0}$ in $\Q$ assuming $\widetilde{\kappa}_j=\pm 1$ and $\Psi_0\in\widehat{D}$. It is a Markov process with the following structure (similar to the ones considered in \cite{timelike,bohmibc,HT22}): 

\paragraph*{Initial Distribution.}
The initial configuration $Q_0$ has probability distribution given by the Born distribution $\PPP_{\Psi_0}$ as in \eqref{Born3}.

\paragraph*{Deterministic Evolution by Bohm's Equation of Motion.} At any time $t$ at which $Q_t$ lies in the upper sector $\Q^{(1)}$, it moves according to Bohm's equation of motion \eqref{eq:Bohmcoo1}; that is, the world line is tangent to the 4-vector field $j^\mu$. 

\paragraph*{Deterministic Jumps.} When $Q_t$ reaches the singularity $r=0$ at time $t_0$, it jumps to the lower sector, $Q_{t_0+}=\emptyset$, and stays there for some time interval.

\paragraph*{Stochastic Jumps.} When $Q_t$ sits in the lower sector, it jumps to a trajectory leaving the singularity with a certain jump rate. The general formula for the rate of jumping at time $t$, given that $Q_{t-}=q'$, to anywhere in an infinitesimal set $\D q$ can be derived \cite{GT05b} to be
\be \label{eq:jumprategeneral}
\sigma_{t}(q' \rightarrow \D q) = \dfrac{\max\{0,J_{\Psi_{t}}^{\perp}(q)\}}{\rho_{\Psi_{t}}(q')}\nu{(\D q,q')}\,,
\ee
where $J^{\perp}$ is the component of probability current in coordinates orthogonal to the boundary of configuration space (in our case, the radial component), $\rho$ is the probability density and $\nu$ the surface area measure on the part of the boundary allowed for jumps from $q'$. In our case, only $q'=\emptyset$ can occur, $\rho_{\Psi_t}(q')=|\Psi^{(0)}_t|^2$, and $\nu(\cdot,\emptyset)$ is the surface area measure on $\mathbb{S}^{2}$ \cite{timelike}. The trajectory onto which to jump gets characterized by the boundary point $q$ at which it starts; in our case, $q$ lies on the boundary $\{0\}\times \SSS^2$ of $[0,\infty)\times \SSS^2$ and thus represents the \emph{direction of emission}. As we will show in Section~\ref{sec:equivariance}, \eqref{eq:jumprategeneral} asserts in our case that the rate of jumping to a point $q$ in the surface element $\{0\}\times \D^2\vomega$ is
\be\label{jumprate}
\sigma_t(\emptyset \to \D^2\vomega) = \dfrac{\max\{0, -\mathrm{Im}[c_{-}^{*}(t)c_{+}(t)] \}}{2\pi|Q| \; \bigl|\Psi_t^{(0)}\bigr|^{2}} \D^2\vomega
\ee
with $c_{\pm}(t)$ from \eqref{cdef}. The total jump rate (or the rate of leaving $\emptyset$) at $t$ is thus given by
\be\label{totaljumprate}
\sigma_t(\emptyset\to \SSS^2) 
=\int_{\vomega\in\SSS^2} \hspace{-3mm} \sigma_t(\emptyset \to \D^2\vomega) = 2\dfrac{\max\{0, -\mathrm{Im}[c_-^*(t)c_+(t)] \}}{|Q|\; \bigl|\Psi_t^{(0)} \bigr|^{2}} \,.
\ee
As we elucidate in Section~\ref{sec:equivariance}, the rate \eqref{jumprate} ensures equivariance. Since the fraction in \eqref{jumprate} does not depend on $\vomega$, the probability distribution of the direction of emission, given that a jump occurs at $t$, is uniform over the sphere.

\bigskip

This completes the definition of the process. We briefly note that the description just given agrees with what was conjectured in \cite{timelike} about the form of the Bohm-Bell process (except that the IBC considered there is not the same as our \eqref{ibc}).
We conclude this section with two remarks.

\begin{remark}[Negative times]
The definition can be extended to provide a process $(Q_t)_{t\in\RRR}$ also for negative times by choosing the initial time to be any $t_0$ instead of $0$, noting that different choices of $t_0$ are compatible with each other (in the sense that the two processes are equal in distribution after the later of the two choices of $t_0$), and letting $t_0\to-\infty$.
\end{remark}

\begin{remark}[Foliation]
	We define the process relative to the foliation given by the Reissner-Nordstr\"om time coordinate, but the random path in space-time is actually indendent of the choice of the foliation. The situation will be different for more than 1 $y$-particle~\cite{HBD}.
\end{remark}

\subsection{Structure of the Following Sections}

The rest of the paper is devoted to proving Theorem \ref{theorem3.1} and justifying our claims on the trajectories and the jump process from Section \ref{subsec:jump}, in particular proving Proposition \ref{prop:2}. In order to do so, we first recall some preliminaries in Section~\ref{chapter2}. Afterwards, in Section~\ref{chapter3} we construct the IBC Hamiltonian and thus prove Theorem~\ref{theorem3.1}.The following Section~\ref{chapter4} deals with the trajectories and the jump process. The ultimate Section \ref{sec:conclusions} contains some concluding remarks.

\section{Preparation of Proofs: Symmetries and Transformations}
\label{chapter2}

In this section we gather some preliminary information regarding the Dirac Hamiltonian in the sRN background. The principal goal of this section is to transform the Hamiltonian in a simple form, thereby exploiting its built-in spherical symmetry (see Section~\ref{lotf} and \cite{CP82,Tha92,BMM00,KTZT20}) and a convenient scalar change of variables (see Section~\ref{subsec:Rcoord}). 
We follow mostly Cohen and Powers \cite{CP82} and Thaller \cite{Tha92}.

\subsection{Radial Symmetry: Hilbert Space Decomposition}
\label{lotf}

We write the Hilbert space $\Hilbert^{(1)}$, given by \eqref{Hilbert1def}, in the form
\be
\Hilbert^{(1)}= L^2\bigl((0,\infty),\CCC,A^{-1}r^2\D r \bigr) \otimes L^2 \bigl(\SSS^2,\CCC^4,\D^2\vomega \bigr)\,.
\ee
As a consequence of its rotational symmetry, $H_1$ leaves angular momentum eigenspaces invariant; in particular, it leaves the specific eigenspaces $L^2((0,\infty),\CCC,A^{-1}r^2\D r) \otimes \Kilbert_{m_j\kappa_j}$ invariant, where
\be
\mathscr{K}_{m_{j},\kappa_{j}} = \bigl\{ c^{+}\Phi_{m_{j},\kappa_{j}}^{+} + c^{-}\Phi_{m_{j},\kappa_{j}}^{-} ~:~ c^{\pm} \in \CCC \bigr\}
\ee
and the $\Phi^{\pm}_{m_j\kappa_j}$ form an ONB of $L^2(\SSS^2,\CCC^4,\D^2\vomega)$ given explicitly in Appendix~\ref{app:Phi}. As a consequence, with respect to the decomposition
\be\label{Angulardecomp}
L^{2}(\SSS^{2},\CCC^{4},\D^2\vomega) = \bigoplus_{j = \frac{1}{2},\frac{3}{2},...}^{\infty} ~ \bigoplus_{m_{j}= -j}^{j} ~ \bigoplus_{\kappa_{j}= \pm(j+\frac{1}{2})} \!\! \Kilbert_{m_{j},\kappa_{j}}\,,
\ee
$H_1$ is block diagonal,
\be
H_1 = \bigoplus_{j = \frac{1}{2},\frac{3}{2},...}^{\infty} ~ 
\bigoplus_{m_{j}= -j}^{j} ~ \bigoplus_{\kappa_{j}= \pm(j+\frac{1}{2})} 
\!\! H_{1m_j\kappa_j}^{\mathrm{red}}\,.
\ee
We consider each block $H_{1m_j\kappa_j}^{\mathrm{red}}$ individually. Relative to the basis $\{\Phi^+_{m_j\kappa_j}, \Phi^-_{m_j\kappa_j}\}$, it can be written as a $2\times 2$ matrix whose entries are operators acting on the radial Hilbert space $L^2((0,\infty),\CCC, A^{-1}r^2\D r)$, in fact
\be\label{Hredr}
H_{1m_j\kappa_j}^{\mathrm{red}} 
= \begin{bmatrix}
        qQr^{-1} + mA & -A^{2}(\partial_{r}+\tfrac{1}{r}) - \dfrac{AA'}{2} + \dfrac{\kappa_j A}{r} \\
        A^{2}(\partial_r+\tfrac{1}{r}) +\dfrac{AA'}{2} + \dfrac{\kappa_j A}{r} &  qQr^{-1} - mA 
    \end{bmatrix}\,.
\ee
(Recall that $A$ is a function of $r$ and $A'$ its derivative.) To see this, we note first that the operators $\alpha^1$ (which in our spinor basis $b_x$ is the $\alpha$ associated with the radial direction) and $\beta$ (and thus also $\gamma^1=\beta\alpha^1$) leave the subspaces $\mathscr{K}_{m_{j}\kappa_{j}}$ invariant; with respect to the basis $\{\Phi^+_{m_j\kappa_j}, \Phi^-_{m_j\kappa_j}\}$, they take the form
\be\label{alphabeta}
\alpha^1 = \begin{pmatrix}
                0 & -\I \\
                \I & 0 
                \end{pmatrix}, \hspace{0.5cm} 
\beta = \begin{pmatrix}
    1 & 0 \\
    0 & -1 \\
\end{pmatrix}, \hspace{0.5cm} 
\gamma^1 = \begin{pmatrix}
                0 & -\I \\
                -\I & 0 
                \end{pmatrix}\,.
\ee
Second, we note that \cite[(2.8) and (2.9)]{CP82}
\be
\gamma^1K = -\alpha^2 (\partial_\vartheta+\tfrac{1}{2} \cot\vartheta) - \alpha^3(\sin\vartheta)^{-1} \partial_\varphi\,.
\ee
With these relations and \eqref{Phieigenvectors}, \eqref{Hredr} follows from \eqref{DiracHamiltonian}.

\subsection{The $R$ Coordinate} \label{subsec:Rcoord}

It turns out useful to change coordinates from $r$ to the ``tortoise coordinate'' (a.k.a. ``Regge-Wheeler coordinate'') $R(r)$, defined for $r\geq 0$ to be the solution to 
\begin{equation}\label{Rdef}
\dfrac{\D R}{\D r} = \dfrac{1}{A(r)^{2}} ~~\text{with}~~R(r=0) = 0
\end{equation}
and $A$ from \eqref{Adef} (see Figure \ref{fig:R(r)}); $R$ is called $x$ in \cite{CP82,Bel98,BMM00} and $r^*$ in \cite{HE73}. It is such that for any fixed $\vomega$, the $(t,R)$ coordinates are ``conformally Lorentzian,'' i.e., the 2d metric in $(t,R)$ coordinates is $1+1$ Minkowskian up to a scalar (conformal) factor $A^2$. As a consequence, any radial null geodesic satisfies $R=t+\mathrm{const.}$, $\vomega=\mathrm{const.}$; thus, one can say the physical meaning of $R$ of a space-time point $x$ is the coordinate time it takes a light ray to reach $x$ from the singularity.
Although we do not need the explicit form of the solution, we mention that it is given by \cite{Bel98,BMM00}
\begin{align}
R(r) 
=r + M \log\left(\dfrac{r^{2}-2Mr +Q^2}{Q^{2}}\right) +\dfrac{2M^{2}-Q^{2}}{\sqrt{Q^{2}-M^{2}}}\arctan\left(\dfrac{r-M}{\sqrt{Q^{2}-M^{2}}}\right) + C \label{Rsolution}
\end{align}
with suitable integration constant $C$.\footnote{The expression given in \cite[p.~157]{HE73} has incorrect constant prefactors.}

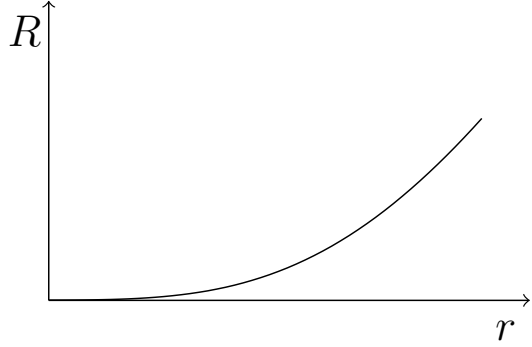
\begin{figure}[h]
\centering
\begin{tikzpicture}[scale=1.4]
	\begin{axis}[
		xtick=\empty,
		ytick=\empty,
		ztick=\empty,
		xmin=-1, xmax=2,
		ymin=-1, ymax= 1,
		samples=1000,
		axis lines = none,
		]
		\draw[->] (0,0) -- (2,0);
		\draw[->] (0,0) -- (0,1);
		\node at (1.9,-0.1) {$r$};
		\node at (-0.1,0.9) {$R$};
		\addplot[domain=0:1.8, samples=50] ({x}, {x + ln(0.25*(x^2 -2*x +4))- 1.15*rad(atan(0.577*(x-1))) -0.601});
	\end{axis}
\end{tikzpicture}
\caption{Graph of the function $R(r)$ defined in \eqref{Rdef} and given explicitly in \eqref{Rsolution} for $M=1$ and $Q=2$; in this case, $C \approx -0.601$.}
\label{fig:R(r)}
\end{figure}

\begin{lemma}[The $R$-coordinate transformation]\label{Rasymptotics} 
Let $R, r >0$ be related by \eqref{Rdef}. Then 
\begin{subequations}
\begin{align}
&\lim_{R\rightarrow 0} R^{-1/3}\, r(R) = (3Q^{2})^{1/3} \label{RrQ} \\
&\lim_{R\rightarrow 0} r^{2}(R) \, A^{2}(r(R)) = Q^{2}\,.
\end{align}
\end{subequations}
In particular, as $R\to 0$ (or, equivalently, $r\to0$),
\be\label{Rr3}
R \sim r^{3}
\ee
and
\be
A^{2}(r(R)) \sim R^{-2/3}.
\ee
\end{lemma}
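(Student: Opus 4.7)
\medskip

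\noindent\textbf{Proof plan for Lemma~\ref{Rasymptotics}.}

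The strategy is to observe that the $Q^2/r^2$ term in \eqref{Adef} dominates as $r \searrow 0$, so that $A^2(r) \sim Q^2/r^2$ and hence $dR/dr \sim r^2/Q^2$; integrating yields $R \sim r^3/(3Q^2)$, and all four asymptotic statements follow.

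First I would handle the second limit, which is immediate: from \eqref{Adef} one has the exact identity
\be
r^2 A^2(r) = r^2 - 2Mr + Q^2\,,
\ee
whose right-hand side visibly tends to $Q^2$ as $r \to 0$. To transfer this to the variable $R$, I would note that $R$ is a strictly increasing $C^\infty$ function of $r$ on $(0,\infty)$ (since $1/A^2 > 0$ in the super-critical regime by \eqref{supercritical}) with $R(0)=0$, hence $r \to 0$ is equivalent to $R \to 0$ and we may freely compose with $r(R)$.

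For the first limit, I would apply L'Hôpital's rule to the quotient $R(r)/r^3$ as $r \searrow 0$. Both numerator and denominator vanish at $r=0$, and
\be
\frac{d}{dr}R(r) = \frac{1}{A^2(r)} = \frac{r^2}{r^2 - 2Mr + Q^2}\,, \qquad \frac{d}{dr}(r^3) = 3r^2\,,
\ee
so
\be
\lim_{r \searrow 0}\frac{R(r)}{r^3} = \lim_{r \searrow 0}\frac{1}{3(r^2 - 2Mr + Q^2)} = \frac{1}{3Q^2}\,.
\ee
Rearranging gives $\lim_{r \searrow 0} r/R^{1/3} = (3Q^2)^{1/3}$, which, after reparametrizing in $R$, is \eqref{RrQ}.

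Finally, \eqref{Rr3} is just the statement $R/r^3 \to 1/(3Q^2) \in (0,\infty)$ rephrased in $\sim$-notation, and combining $r^2 A^2 \to Q^2$ with $r \sim (3Q^2 R)^{1/3}$ yields $A^2 \sim Q^2/r^2 \sim Q^2 (3Q^2 R)^{-2/3} \sim R^{-2/3}$ (absorbing constants into the $\sim$). There is essentially no obstacle here; the only care needed is to verify that $R(r)$ is $C^1$ up to $r=0$ in the sense required to apply L'Hôpital, which follows from the explicit formula \eqref{Rsolution} (or more directly from the convergence of $\int_0^r dr'/A^2(r')$ near $0$ guaranteed by $1/A^2 \sim r^2/Q^2$).
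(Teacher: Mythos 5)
Your proof is correct and fills in the details of what the paper treats as immediate (the paper's proof is a single line: ``This follows from \eqref{Rdef} and the definition of $A(r)$.''). The key observations are exactly the ones the authors intend: $r^2 A^2(r) = r^2 - 2Mr + Q^2 \to Q^2$, and $\D R/\D r = 1/A^2 \sim r^2/Q^2$, so $R \sim r^3/(3Q^2)$; your use of L'H\^opital is a clean way to make the latter rigorous (one could equally well write $R(r) = \int_0^r r'^2/(r'^2 - 2Mr' + Q^2)\,\D r'$ and estimate the integrand directly), and your check that $R(r)$ is a strictly increasing bijection of $(0,\infty)$ onto itself with $R(0^+)=0$ correctly justifies passing between the $r\to 0$ and $R\to 0$ limits.
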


\begin{proof}
This follows from \eqref{Rdef} and the definition of $A(r).$ 
\end{proof}

Next, we make use of the $R$ coordinate to define a unitary transformation $U$ of the radial Hilbert space as in \cite{CP82}:
\begin{align}\label{unitarytransform}
	\begin{split}
U: L^2((0,\infty),\CCC, A^{-1}r^2\D r) &\to L^2((0,\infty),\CCC, \D R)\\ 
 \Psi(r) &\mapsto \phi(R) := r(R) \, A(r(R))^{1/2} \, \Psi(r(R)) 
	\end{split}
\end{align}
It is unitary because $\D R = A^{-2} \, \D r$, so $|\phi|^2 \D R= r^2 A |\Psi|^2 A^{-2} \D r = |\Psi|^2 A^{-1} r^2 \D r$.

The main advantage of introducing the $R$ coordinate is that it removes the $A^2$-factor in front of the differential operator in \eqref{Hredr}:

\begin{lemma}[Transformed Hamiltonian] \label{lem:unitarytransform}
Under the unitary transformation $U$ defined by \eqref{unitarytransform}, the reduced Hamiltonian acting on each subspace is given by $h_{m_{j},\kappa_{j}} = UH_{1m_j\kappa_j}^{\mathrm{red}}U^{-1}$ with 
\be \label{eq:hmjkj}
    h_{m_{j},\kappa_{j}} = \begin{bmatrix}
        qQr(R)^{-1} + mA(r(R)) & -\partial_{R} +\kappa_j A(r(R))r(R)^{-1}\\\\
        \partial_{R} + \kappa_j A(r(R))r(R)^{-1} &  qQr(R)^{-1} - mA(r(R))\\
    \end{bmatrix}\,,
\ee
which is well defined and symmetric on the domain
\be \label{eq:hmjkjdomain}
D(h_{m_j, \kappa_j})=C_{c}^{\infty} \bigl( (0,\infty),\mathscr{K}_{m_{j},\kappa_{j}}\bigr)\cong C_{c}^{\infty}\bigl( (0,\infty),\CCC^2 \bigr)\,.
\ee 
\end{lemma}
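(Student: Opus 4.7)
The plan is to verify the identity $h_{m_j,\kappa_j} = U H_{1m_j\kappa_j}^{\mathrm{red}} U^{-1}$ by direct computation, exploiting that the change of variables $r \mapsto R$ defined by \eqref{Rdef} satisfies $\partial_r = A^{-2}\partial_R$, which is precisely designed to absorb the $A^2$ prefactor in front of the radial derivative in \eqref{Hredr}. Concretely, for $\phi \in C_0^\infty((0,\infty),\Kilbert_{m_j,\kappa_j})$, I would write $\Psi = U^{-1}\phi = (rA^{1/2})^{-1}\phi$, apply the $2\times 2$ operator matrix \eqref{Hredr}, and then multiply by $rA^{1/2}$ (and re-read the result as a function of $R$) to recover $U$-image.

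The key computation is the action of the off-diagonal operator $A^2(\partial_r + r^{-1}) + \tfrac{1}{2}AA'$ on $\Psi = (rA^{1/2})^{-1}\phi$. Using $A^2\partial_r\phi = \partial_R\phi$ and Leibniz,
\begin{equation*}
A^2\partial_r\Psi = (rA^{1/2})^{-1}\partial_R\phi + A^2\,\partial_r\bigl[(rA^{1/2})^{-1}\bigr]\phi = (rA^{1/2})^{-1}\partial_R\phi - r^{-2}A^{3/2}\phi - \tfrac{1}{2}r^{-1}A^{1/2}A'\phi.
\end{equation*}
Adding $A^2 r^{-1}\Psi = r^{-2}A^{3/2}\phi$ cancels the first error term, and adding $\tfrac{1}{2}AA'\Psi = \tfrac{1}{2}r^{-1}A^{1/2}A'\phi$ cancels the second; what survives is the clean expression $(rA^{1/2})^{-1}\partial_R\phi$. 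Multiplying by $rA^{1/2}$ yields $\partial_R\phi$, matching the corresponding entry of \eqref{eq:hmjkj}. The analogous computation with the opposite sign (for the upper off-diagonal entry) gives $-\partial_R\phi$, and the purely multiplicative term $\kappa_j A/r$ becomes $\kappa_j A(r(R))/r(R)$ after the $rA^{1/2}$ factors cancel. The diagonal entries $qQ r^{-1} \pm mA$ are pure multiplication operators that commute with $U$ in the obvious sense, so they are simply reinterpreted as multiplication by $qQ r(R)^{-1} \pm mA(r(R))$.

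For the domain claim, I would note that the map $r \mapsto R(r)$ is a smooth, strictly increasing bijection of $(0,\infty)$ onto $(0,\infty)$ (cf.\ Lemma~\ref{Rasymptotics} for the behavior at $0$ and the explicit formula \eqref{Rsolution} for the behavior at $\infty$), and the prefactor $r A(r)^{1/2}$ is smooth and nonvanishing on $(0,\infty)$. Hence $U$ carries $C_0^\infty((0,\infty),\Kilbert_{m_j,\kappa_j})$ bijectively onto $C_0^\infty((0,\infty),\CCC^2)$ (after identifying $\Kilbert_{m_j,\kappa_j}\cong\CCC^2$ via the ONB $\{\Phi^+_{m_j\kappa_j},\Phi^-_{m_j\kappa_j}\}$), so the transformed operator is indeed well defined on \eqref{eq:hmjkjdomain}.

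I expect no serious obstacle: the content is a bookkeeping exercise whose only non-obvious point is the \emph{exact} cancellation of the spin-connection-like term $\tfrac{1}{2}AA'$ against the derivative of the conjugation factor $(rA^{1/2})^{-1}$. This cancellation is exactly what motivates the specific choice of the factor $rA^{1/2}$ (rather than some other $r^\alpha A^\beta$) in the definition \eqref{unitarytransform} of $U$, and it is the only place where a sign error or arithmetic slip could occur; I would double-check it by verifying separately that $U$ is unitary (which was already done in the paragraph after \eqref{unitarytransform}).
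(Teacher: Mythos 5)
Your proposal is correct and is essentially the same argument the paper gives: the paper's proof simply states the key identity $\partial_R(U\Psi)=U\bigl((A^2\partial_r+A^2/r+\tfrac{1}{2}AA')\Psi\bigr)$ and says it ``can be easily verified using $\partial_R=A^2\partial_r$,'' and your computation is precisely that verification, including the exact cancellation of the $r^{-2}A^{3/2}$ and $\tfrac{1}{2}r^{-1}A^{1/2}A'$ terms against the derivative of the conjugating factor. Your remarks on the domain ($r\mapsto R$ a smooth increasing bijection of $(0,\infty)$ onto itself and $rA^{1/2}$ smooth and nonvanishing) are a correct and slightly more explicit justification of the well-definedness claim than the paper spells out.
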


\begin{proof} 
The formula follows from
\be
\partial_R (U(\Psi)) = U\biggl( \Bigl(A^2\partial_r +\frac{A^2}{r} + \frac{AA'}{2}\Bigr)\Psi \biggr)\,,
\ee
which can be easily verified using $\partial_R = A^2 \partial_r$. Symmetry follows using integration by parts from the fact that $h_{m_j, \kappa_j}$ is the sum of a multiplication operator by a self-adjoint $R$-dependent matrix and $\partial_R$ times the skew-adjoint $R$-independent matrix $\bigl( \begin{smallmatrix} 0&-1\\1&0 \end{smallmatrix} \bigr)$.
\end{proof}

\begin{remark}
From \eqref{eq:r(t)} and \eqref{RrQ}, it follows that
\be\label{R(t)}
R(t) = \frac{C_\mathrm{rad}^3}{3Q^2} \: |t-t_0| + \mathcal{O}\Bigl( |t-t_0|^{4/3} \Bigr) \,.
\ee
\end{remark}

\section{Constructing an IBC: Proof of Theorem \ref{theorem3.1}}
\label{chapter3}

In this section, we construct an IBC Hamiltonian with particle creation and thereby prove Theorem \ref{theorem3.1}.

\subsection{A Family of IBC Hamiltonians: Proof of Theorem \ref{theorem3.1}} \label{subsec:thmref}

Our main result, Theorem \ref{theorem3.1}, will be directly obtained from the following slightly reformulated and generalized version of it. Recall that, the Hilbert space of our model is the mini-Fock space $\Hilbert=\Hilbert^{(0)}\oplus \Hilbert^{(1)}$ with $\Hilbert^{(0)}=\CCC$ and 
$\Hilbert^{(1)}$ given by \eqref{Hilbert1def}.

\begin{theorem}[Generalized reformulation of Theorem \ref{theorem3.1}] \label{thm:thmref}
For every $(\widetilde{m}_j,\widetilde{\kappa}_j)\in\sA$, $g\in\CCC\setminus\{0\}$, and real numbers $a_1, ... , a_4 \in \RRR$ satisfying $a_1a_4 - a_2 a_3 = 1$, there is a self-adjoint operator $H$ with domain $D\subset\Hilbert$ such that 
\begin{enumerate}
\item For every $\Psi\in D$, the upper sector is of the form
\be \label{eq:Psiasympt2}
\Psi^{(1)}(r, \pmb{\omega}) = \frac{c_{-}}{|Q|^{1/2}} \,  r^{-1/2}\Phi_{\widetilde{m}_{j},\widetilde{\kappa}_{j}}^{-}(\pmb{\omega}) +  \sum_{(m_j,\kappa_j ) \in \sA}\frac{c_{+m_{j}\kappa_{j}}}{|Q|^{1/2}} r^{-1/2}\Phi_{m_{j},\kappa_{j}}^{+}(\pmb{\omega}) + \mathcal{O}(r^{1/2})
\ee
as $r\to 0$ for some (uniquely determined) \emph{short-distance coefficients} $c_-, c_{+ m_j \kappa_j} \in \CCC$ and $\Phi^{\pm}$ from \eqref{jointeigenbasis}. 

\item Every $\Psi\in D$ satisfies the IBC \be a_{1}c_{-} + a_{2}c_{+\widetilde{m}_{j}\widetilde{\kappa}_{j}} = g\Psi^{(0)}\label{generalizedIBC} \ee 

\item For every $\Psi^{(1)}\in C_c^\infty((0,\infty)\times \SSS^2, \CCC^4)$, $(0,\Psi^{(1)})\in D$, and $H(0,\Psi^{(1)})=(0,H_1 \Psi^{(1)})$ with $H_1$ as in \eqref{DiracHamiltonian}. Put differently, $(H,D)$ is a self-adjoint extension of $(H_1^0,D^0)$ with $D^0= \{0\}\oplus C_c^\infty((0,\infty)\times \SSS^2, \CCC^4)$ and $H_1^0(0, \psi)=(0,H_1\psi)$.

\item The $0$-particle action of $H$ is given by \be (H\Psi)^{(0)} = g^{*} (a_{3}c_{-}+a_{4}c_{+\widetilde{m}_{j}\widetilde{\kappa}_{j}}) \ee 
	\item Particle creation occurs, i.e., $H$ is not block diagonal in the decomposition $\Hilbert^{(0)}\oplus \Hilbert^{(1)}$.
\end{enumerate}
\end{theorem}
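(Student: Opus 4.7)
My plan is to reduce the construction of $H$ to a family of one-dimensional radial problems via the angular decomposition \eqref{Angulardecomp} and the $R$-coordinate transformation $U$ of Lemma~\ref{lem:unitarytransform}, and then to couple the distinguished sector to $\Hilbert^{(0)}$ through the IBC. First I would write $H_1^0 = \bigoplus_{(m_j,\kappa_j)\in\sA} U^{-1}\, h_{m_j,\kappa_j}^0\, U$, with $h_{m_j,\kappa_j}^0$ defined on $C_c^\infty((0,\infty),\CCC^2)$. By Lemma~\ref{Rasymptotics} the off-diagonal entry $\kappa_j A(r(R))/r(R) \sim \kappa_j (3|Q|)^{-2/3} R^{-2/3}$ is the dominant singular term near $R=0$, and a Frobenius-type analysis of $h_{m_j,\kappa_j}\phi=\lambda\phi$ in this regime, following Cohen and Powers \cite{CP82}, shows that every $\phi\in\mathrm{Dom}(h_{m_j,\kappa_j}^*)$ admits the expansion $\phi(R)=(c_{+m_j\kappa_j},c_{-m_j\kappa_j})^{T}+o(1)$ at $R=0$ in the $\{\Phi^+,\Phi^-\}$-basis. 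Conjugating back by $U^{-1}$ turns the constant leading term into the $r^{-1/2}|Q|^{-1/2}$ prefactor, which is item~1.

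I would then take the candidate domain $D$ to consist of pairs $(\Psi^{(0)},\Psi^{(1)})\in\Hilbert$ with $\Psi^{(1)}\in\mathrm{Dom}(H_1^*)$ satisfying \eqref{eq:Psiasympt2}, $c_{-m_j\kappa_j}=0$ in every non-distinguished sector, and the IBC \eqref{generalizedIBC} in the distinguished sector; $H$ acts by $H_1$ on $\Psi^{(1)}$ and by item~4 on $\Psi^{(0)}$. Items 2--5 then hold by construction. Symmetry follows from integration by parts, which in each sector produces the boundary form at $R=0$,
\begin{equation*}
\langle\phi, h_{m_j\kappa_j}\psi\rangle-\langle h_{m_j\kappa_j}\phi,\psi\rangle=\bar c_{+m_j\kappa_j}^{\phi}\, c_{-m_j\kappa_j}^{\psi}-\bar c_{-m_j\kappa_j}^{\phi}\, c_{+m_j\kappa_j}^{\psi}.
\end{equation*}
In non-distinguished sectors this vanishes since $c_-=0$ there. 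In the distinguished sector, substituting the IBC together with the formula for $(H\Psi)^{(0)}$ into the total form $\bar\Psi^{(0)}(H\Phi)^{(0)}-\overline{(H\Psi)^{(0)}}\Phi^{(0)}+\bar c_+^{\Psi} c_-^{\Phi}-\bar c_-^{\Psi} c_+^{\Phi}$ gives, after a short algebraic expansion, the coefficient $\bigl(1-(a_1a_4-a_2a_3)\bigr)(\bar c_+^{\Psi} c_-^{\Phi}-\bar c_-^{\Psi} c_+^{\Phi})$, which vanishes exactly under the hypothesis $a_1a_4-a_2a_3=1$.

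The main obstacle will be upgrading symmetry to self-adjointness, i.e., establishing $\mathrm{Dom}(H^*)\subseteq D$. In each sector $h_{m_j\kappa_j}^*$ has a two-dimensional short-distance deficiency spanned by the $(c_+,c_-)$ directions; a single real linear relation on these coefficients cuts out a maximally isotropic subspace of the boundary form and hence a self-adjoint extension by Glazman--Krein--Naimark theory. Our domain realizes such a relation in non-distinguished sectors (via $c_-=0$), while in the distinguished sector the missing relation is supplied by the IBC, the unimodularity condition $a_1a_4-a_2a_3=1$ being precisely what makes the coupled data $(a_1,a_2,a_3,a_4,g)$ parametrize a maximally isotropic subspace of the enlarged boundary form on $\CCC\oplus\Kilbert_{\widetilde{m}_j\widetilde{\kappa}_j}$. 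Carrying out the Frobenius expansion rigorously with the subleading Coulomb-type perturbations $qQ/r$ and $mA$ present, controlling the remainder $\mathcal{O}(r^{1/2})$ uniformly, and then handling the infinite orthogonal sum over angular momentum sectors will be the technical heart of the argument; item~5 is then immediate since $g\neq 0$ together with the IBC forces a nontrivial coupling between the sectors.
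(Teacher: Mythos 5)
Your proposal follows essentially the same route as the paper: decompose via angular momentum, pass to the $R$-coordinate, read off the short-distance coefficients from the Cohen--Powers analysis of $D(h_{m_j\kappa_j}^*)$, impose $c_-=0$ (the $\theta=0$ extension) in the non-distinguished sectors and the IBC in the distinguished one, and verify self-adjointness by the same boundary-form computation that you sketch (the paper carries it out explicitly along the lines of \cite{HT20} rather than invoking GKN theory abstractly, but the algebra, including the role of $a_1a_4-a_2a_3=1$, is identical). One small caveat you correctly flag as remaining work: your claimed $\phi(R)=\phi(0)+o(1)$ is not strong enough to deduce the $\mathcal{O}(r^{1/2})$ remainder in item 1 after unitarily transforming back (since $rA^{1/2}\sim|Q|^{1/2}r^{1/2}$ and $R\sim r^3$); the paper establishes the quantitative $\phi(R)=\phi(0)+\mathcal{O}(R^{1/3})$ in its Lemma~\ref{Lemma 1} by directly estimating the Volterra-type integral representation from \cite[Eq.~(5.3)]{CP82}, and this is precisely what yields $\mathcal{O}(r^{1/2})$.
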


\begin{proof}[Proof of Theorem~\ref{theorem3.1}] 
Theorem \ref{theorem3.1} follows from Theorem \ref{thm:thmref} by taking $a_1, a_4 = 1$, $a_2, a_3 = 0$, and  invoking the particular form of $\Phi^{\pm}$ from \eqref{jointeigenbasis} and $\beta = \mathrm{diag}(1,1,-1,-1)$.
  In particular, the IBC Hamiltonian presented in Theorem \ref{theorem3.1} is in fact part of an entire family of Hamiltonians described by the four real parameters $a_1, ... , a_4$ under the constraint $a_1a_4 - a_2 a_3 = 1$. This concludes the proof of Theorem~\ref{theorem3.1}.
\end{proof}

The rest of this section is devoted to proving Theorem \ref{thm:thmref}.

\begin{remark}[Outline of the proof of Theorem \ref{thm:thmref}]
In constructing the self-adjoint $H$ in Theorem \ref{thm:thmref}, we will decompose the domain $D^{0}$ into fixed angular momentum sectors $\mathscr{K}_{m_{j},\kappa_{j}}$ as in Section~\ref{chapter2}. That is, we will exploit that $D^{0}$ is unitarily equivalent to
\be
\{0\} \oplus \bigoplus\limits_{j,m_{j},\kappa_{j}} C_{c}^{\infty}((0,\infty), \CCC, \D R) \otimes \mathscr{K}_{m_{j},\kappa_{j}}\,. 
\ee
The construction of $H$ now proceeds separately for each sector. We couple one chosen angular momentum sector $\mathscr{K}_{\widetilde{m}_{j},\widetilde{\kappa}_{j}}$ to the $0$-particle sector $\Hilbert^{(0)}$ of $\mathscr{H}$ while all the other angular momentum sectors do not couple to the $0$-particle part. In particular, $H$ is block diagonal relative to the decomposition
\be \label{eq:blockdiag}
\mathscr{H} \cong \widehat{\mathscr{H}} \oplus \bigoplus_{(m_{j},\kappa_{j}) \neq (\widetilde{m}_{j},\widetilde{\kappa}_{j})} L^2((0,\infty), \CCC, \D R) \otimes \mathscr{K}_{m_{j},\kappa_{j}}\,,
\ee
but not relative to
\be \label{eq:hatHilbert}
\widehat{\mathscr{H}} = \mathscr{H}^{(0)} \oplus L^2((0,\infty), \CCC, \D R) \otimes \mathscr{K}_{\widetilde{m}_{j},\widetilde{\kappa}_{j}}
\ee
In the proof, we construct a self-adjoint $\widehat{H}$ acting on $\widehat{\mathscr{H}}$ using interior boundary conditions. This is done by connecting the near-origin behavior of functions in the \emph{adjoint} domain of $C_{c}^{\infty}((0,\infty), \CCC, \D R) \otimes \mathscr{K}_{\widetilde{m}_{j},\widetilde{\kappa}_{j}}$ (see \cite[Theorem 5.2]{CP82}) to the $0$-particle sector $\Hilbert^{(0)}$. In a similar way, we will choose self-adjoint extensions of $h_{m_{j},\kappa_{j}}$ on $C_{c}^{\infty}((0,\infty), \CCC, \D R) \otimes \mathscr{K}_{m_{j},\kappa_{j}}$ for $(m_{j},\kappa_{j}) \neq (\widetilde{m}_{j},\widetilde{\kappa}_{j})$ which do not couple to $\Hilbert^{(0)}$. This completes the construction of a self-adjoint $H$. 
\end{remark}

\subsection{Proof of Theorem \ref{thm:thmref}}
\label{sec:pfthm1}

Throughout the entire proof of Theorem \ref{thm:thmref}, we will heavily use the change of variables from Section \ref{subsec:Rcoord}, i.e., use the coordinate $R$ instead of the usual radial variable $r$, which amounts to the unitary transformation in \eqref{unitarytransform}. Moreover, as a preparation of our proof, we state and prove the following lemma concerning the asymptotic behavior of wave functions $\phi$ in the adjoint domain $D(h_{m_{j}\kappa_{j}}^{*})$ of $D(h_{m_{j}\kappa_{j}})$ from \eqref{eq:hmjkj}--\eqref{eq:hmjkjdomain}

\begin{lemma}\label{Lemma 1} Let $\phi = ( \phi_{+},  \phi_{-}) \in D(h_{m_{j}\kappa_{j}}^{*})$. Then $\phi$ is continuous at $R = 0$, i.e., $\lim\limits_{R\rightarrow 0} \phi_{\pm}(R) = \phi_{\pm}(0)$ exists, and 
\be
\phi_{\pm}(R) = \phi_{\pm}(0) + \mathcal{O}(R^{1/3}) \quad \text{as} \quad R \to 0 \,.
\ee
\end{lemma}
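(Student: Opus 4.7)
First, I would unpack the statement $\phi\in D(h_{m_j\kappa_j}^*)$ into a first-order ODE system. Testing against $\chi\in C_c^\infty((0,\infty),\CCC^2)$ and integrating by parts, one finds that $\phi=(\phi_+,\phi_-)$ is locally absolutely continuous on $(0,\infty)$ and, with $\psi:=h_{m_j\kappa_j}^*\phi\in L^2$ and coefficients read off from \eqref{eq:hmjkj}, satisfies a.e.\
\begin{align*}
\partial_R\phi_+ + V_{12}\phi_+ + V_{22}\phi_- &= \psi_-,\\
-\partial_R\phi_- + V_{11}\phi_+ + V_{12}\phi_- &= \psi_+,
\end{align*}
where $V_{11}=qQ/r+mA$, $V_{22}=qQ/r-mA$, and $V_{12}=\kappa_j A/r$.

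Next, using Lemma~\ref{Rasymptotics}, I would record the near-origin behaviour: $V_{11},V_{22}=\mathcal{O}(R^{-1/3})$, which is $L^2_{\mathrm{loc}}$, whereas $V_{12}=\mathcal{O}(R^{-2/3})$ is only $L^1_{\mathrm{loc}}$ and is the chief technical obstacle. I would neutralise it by introducing the integrating factor $g(R):=\int_0^R V_{12}(R')\,\D R'=\mathcal{O}(R^{1/3})$, which is bounded and continuous on $[0,R_0]$, and setting $\tilde\phi_+:=\E^{g}\phi_+$, $\tilde\phi_-:=\E^{-g}\phi_-$. A direct calculation converts the system into
\[
\partial_R\tilde\phi_+ = \E^{g}\bigl(\psi_- - V_{22}\phi_-\bigr),\qquad \partial_R\tilde\phi_- = \E^{-g}\bigl(V_{11}\phi_+ - \psi_+\bigr),
\]
whose right-hand sides lie in $L^1((0,R_0),\D R)$ by Cauchy--Schwarz (using $\psi_\pm\in L^2$ and $V_{11},V_{22}\in L^2_{\mathrm{loc}}$ combined with $\phi_\pm\in L^2$). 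Hence $\tilde\phi_\pm$ extend continuously to $R=0$, and since $g(0)=0$, so do $\phi_\pm$; set $\phi_\pm(0):=\lim_{R\to 0}\phi_\pm(R)$.

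For the rate, I would integrate the transformed equations from $0$ to $R$ and estimate each term. Cauchy--Schwarz yields $\bigl|\int_0^R \E^{g}\psi_-\,\D R'\bigr|\leq CR^{1/2}\|\psi_-\|_{L^2}=\mathcal{O}(R^{1/2})$, and using the just-established boundedness of $\phi_-$ near $0$ gives $\bigl|\int_0^R \E^{g}V_{22}\phi_-\,\D R'\bigr|\leq C\sup_{[0,R]}|\phi_-|\cdot\int_0^R R'^{-1/3}\,\D R' = \mathcal{O}(R^{2/3})$. Hence $\tilde\phi_+(R)=\tilde\phi_+(0)+\mathcal{O}(R^{1/2})$, and combining with $\E^{-g(R)}=1+\mathcal{O}(R^{1/3})$ yields $\phi_+(R)=\phi_+(0)+\mathcal{O}(R^{1/3})$; the argument for $\phi_-$ is identical. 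The main obstacle, as stressed, is the singular off-diagonal coefficient $V_{12}\sim R^{-2/3}$, which is too rough for Cauchy--Schwarz to absorb directly. The choice of $g$ is precisely dictated by the observation that the leading-order homogeneous ODE $\partial_R\phi_\pm \pm V_{12}\phi_\pm=0$ has bounded solutions $\E^{\mp g}$ at $R=0$, so conjugating by $\E^{\pm g}$ is the natural device for reducing to the (easier) $V_{11},V_{22}$ perturbation.
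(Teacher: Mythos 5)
Your proposal is correct and follows essentially the same route as the paper: you conjugate by the integrating factor $\E^{\pm g}$ with $g=\int_0^R \kappa_j A(r(R'))/r(R')\,\D R'$, which is exactly the paper's $\eta$, obtain the same variation-of-constants representation (the paper quotes it from \cite[Eq.~(5.3)]{CP82} rather than rederiving it), and then run the same Cauchy--Schwarz and $L^\infty$ estimates, with the $\mathcal{O}(R^{1/3})$ rate in both cases coming from $\E^{\pm g(R)}=1+\mathcal{O}(R^{1/3})$ dominating the $\mathcal{O}(R^{1/2})$ and $\mathcal{O}(R^{2/3})$ terms. The only substantive difference is that you prove the continuity of $\phi_\pm$ at $R=0$ directly from $\partial_R\tilde\phi_\pm\in L^1(0,R_0)$ rather than citing \cite[Lemma~5.1]{CP82}, which makes your argument slightly more self-contained but not different in substance.
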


\begin{proof}
In \cite[Lemma 5.1]{CP82}, Cohen and Powers prove that the functions in $D(h_{m_{j}\kappa_{j}}^{*})$ are continuous at $R=0$. Here, we obtain more precise information on their asymptotics. From \cite[Eq.~(5.3)]{CP82}, $\phi_{\pm}(R)$ can be expressed as
\begin{subequations}
\begin{align}
    \phi_{+}(R) &= \E^{\eta(R)}\Bigl(\phi_{+}(0) - \int_{0}^{R}\E^{-\eta(y)} \Bigl( h_{2}(y) - \bigl(m-v_{2}(y)\bigr)\phi_{-}(y)\Bigr)\D y\Bigr) \label{3.12}\\ 
    \phi_{-}(R) &= \E^{-\eta(R)}\Bigl(\phi_{-}(0) + \int_{0}^{R}\E^{\eta(y)}\Bigl(h_{1}(y) - \bigl(m+v_{1}(y)\bigr)\phi_{+}(y) \Bigr)\D y\Bigr)\,,
\end{align}
\end{subequations}
where we denoted $h_{m_{j}\kappa_{j}}^{*}\phi =  (	h_{1}, h_{2})$ and 
\begin{subequations}
\begin{align}
    u(R) &=  \frac{\kappa_j A(r(R))}{r(R)}\\ 
    \eta(R) &= \int_{0}^{R}u(y)\D y  \\[1mm]
    v_{1}(R) &= qQr(R)^{-1} +mA(r(R))- m \\[3mm] 
    v_{2}(R) &= qQr(R)^{-1} -mA(r(R))+ m  
\end{align}  
\end{subequations}
By Lemma \ref{Rasymptotics}, we have, asymptotically as $R \rightarrow 0^{+}$, $v_{i}(R) \sim R^{-1/3}$ and $u(R) \sim R^{-2/3}$. Hence, $\eta(R) = \mathcal{O}(R^{1/3})$ and further $\E^{\eta(R)} = 1 + \mathcal{O}(R^{1/3})$. Now we show that the integral term in \eqref{3.12} contributes only $\mathcal{O}(R^{1/2})$ by estimating the three summands in the integral in \eqref{3.12} separately. 

First, note that since $\phi_{-}(y)$ is a continuous function on the compact interval $[0,R]$, it is bounded. Moreover, since $\eta$ is also bounded, we find that $ \int_{0}^{R} \D y |\E^{-\eta(y)} m  \phi_{-}(y)| = \mathcal{O}(R)$. Next, as $v_{2}(R)= \mathcal{O}(R^{-1/3})$, we obtain $ \int_{0}^{R} \D y |\E^{-\eta(y)}\, v_{2}(y) \, \phi_{-}(y)| = \mathcal{O}(R^{2/3})$.  It thus remains to estimate $\int_{0}^{R} \D y \, \E^{-\eta(y)}\, h_{2}(y)$. 

Instead of the previous $L^\infty$-bounds on the other integrands, we now apply the Cauchy-Schwarz inequality to get 
\be
\left|\int_{0}^{R}\E^{-\eta(y)}h_{2}(y)\D y\right| \leq \|\E^{-\eta}\|_{L^{2}[0,R]}\|h_{2}\|_{L^{2}[0,R]} = \mathcal{O}(R^{1/2})\,,
\ee
where we used that $\|\E^{-\eta}\|_{L^{2}[0,R]} = \mathcal{O}(R^{1/2})$, since $\eta$ is bounded, and $h_2 \in L^2[0,\infty)$, which certainly implies $\Vert h_2 \Vert_{L^2[0,R]} = \mathcal{O}(1)$.\footnote{By the dominated convergence theorem, this can in fact be strengthened to $\Vert h_2 \Vert_{L^2[0,R]} = o(1)$ as $R \to 0$, but we do not follow this improvement for simplicity.}   

Combining all the estimates above, we finally conclude that 
\begin{equation}
\phi_{+}(R) = \phi_{+}(0) + \mathcal{O}(R^{1/3}) \quad \text{as} \quad R \to 0
\end{equation}
as desired.
Similarly, we also get $\phi_{-}(R) = \phi_{-}(0) + \mathcal{O}(R^{1/3})$ as $R \rightarrow 0$. 
\end{proof}

Armed with Lemma \ref{Lemma 1}, we can now turn to the actual proof of Theorem \ref{thm:thmref}. This is divided in three steps: 
\begin{itemize}
\item[(i)] First, in Section \ref{proof1}, we define the domain $D$ of $H$ and show that every $\Psi \in D$ satisfies the asymptotics in \eqref{eq:Psiasympt2} and obeys the IBC \eqref{generalizedIBC}. 
\item[(ii)] In Section \ref{proof2}, we then proceed to show that $H$ acting as in items $3$ and $4$ of Theorem \ref{thm:thmref} is in fact self-adjoint on $D$. 
\item[(iii)] Finally, in Section \ref{proof3}, we prove that particle creation occurs with the so defined Hamiltonian, i.e., it is not block diagonal in the decomposition $\Hilbert^{(0)}\oplus \Hilbert^{(1)}$. 
\end{itemize}

\subsubsection{Definition of the Domain $D$}\label{proof1}

We define the domain $D \subset \Hilbert$ of our Hamiltonian $H$ (to be devised) as
\begin{equation} \label{eq:Ddef}
D:= \big((1 \oplus U^{-1})\widehat{D}_{\widetilde{m}_j \widetilde{\kappa}_j}\big) \oplus \bigoplus_{\substack{j, m_j, \kappa_j \\ (m_j, \kappa_j) \neq (\widetilde{m}_j, \widetilde{\kappa}_j)}} U^{-1} D_{m_j \kappa_j}^{\theta  = 0} \,, 
\end{equation}
where we denoted (recall \eqref{eq:hatHilbert} for the definition of $\widehat{\Hilbert}$)
\begin{equation} \label{eq:hatD}
    \widehat{D}_{\widetilde{m}_j \widetilde{\kappa}_j}:= \Bigl\{ (\Psi^{(0)},\phi^{(1)}) \in \widehat{\mathscr{H}}: \phi^{(1)} \in D(h_{\widetilde{m}_{j},\widetilde{\kappa}_{j}}^{*})\ \text{and} \  a_1 \phi^{(1)}_-(0) + a_2 \phi^{(1)}_+(0) = g \Psi^{(0)} \Bigr\}\,.
\end{equation}
Moreover, for $\theta \in [0,2 \pi)$, we denoted
\begin{equation} \label{eq:Dtheta}
D_{m_j \kappa_j}^{\theta } := \Bigl\{\phi = (\phi_+, \phi_-) \in D(h_{m_{j},\kappa_{j}}^{*}) ~:~ \phi_{+}(0)\sin\theta +\phi_{-}(0)\cos\theta = 0\Bigr\}\,.
\end{equation}

This means that, for $\theta = 0$ and $(m_j, \kappa_j) \neq (\widetilde{m}_j, \widetilde{\kappa}_j)$, $c_{- m_j \kappa_j} := \phi_-(0)  = 0$ and $c_{+ m_j \kappa_j} := \phi_+(0) \in \CCC$ is free. We also denote $c_{\pm \widetilde{m}_j, \widetilde{\kappa}_j}  := \phi^{(1)}_\pm(0)$ for $(\widetilde{m}_j, \widetilde{\kappa}_j)$ as in \eqref{eq:hatD} and  abbreviate $c_- \equiv c_{- \widetilde{m}_j, \widetilde{\kappa}_j} $. Therefore, inverting the unitary transform $U$ from \eqref{unitarytransform} in \eqref{eq:Ddef} and invoking Lemma~\ref{Rasymptotics}, we find that for every $\Psi \in D$, the upper sector part $\Psi^{(1)}$ obeys the asymptotics given in \eqref{eq:Psiasympt2}. Moreover, inverting $U$ again, we also find that, by definition of  $\widehat{D}_{\widetilde{m}_j \widetilde{\kappa}_j}$, every $\Psi \in D$ obeys the IBC from \eqref{generalizedIBC}. This proves items $1$ and $2$ of Theorem~\ref{thm:thmref}.

\subsubsection{Self-adjointness of $H$ on $D$}\label{proof2}

First, we have that $h_{m_{j},\kappa_{j}}$ on $D(h_{m_j \kappa_j})$ from \eqref{eq:hmjkjdomain} has self-adjoint extensions parametrized by $\theta \in [0,2\pi)$ as \cite[Theorem~5.2]{CP82}   
\be\label{Theorem 5.2}
h_{m_{j},\kappa_{j}}^{\theta} = h_{m_{j},\kappa_{j}}^{*}\Big|_{D^\theta_{m_j \kappa_j}}\,,
\ee
where $D^\theta_{m_j \kappa_j}$ is defined in \eqref{eq:Dtheta}. 
Therefore, since $H$ leaves the decomposition into angular momentum subspaces invariant, the task of proving self-adjointness of $H$ on $D$ immediately simplifies: It reduces to proving that the Hamiltonian $\widehat{H} \equiv \widehat{H}_{\widetilde{m}_j \widetilde{\kappa}_j}$ acting on $\phi = (\phi^{(0)}, \phi^{(1)}) \in \widehat{D} \equiv \widehat{D}_{\widetilde{m}_j \widetilde{\kappa}_j}$ from \eqref{eq:hatD} with $\phi^{(0)} \equiv \Psi^{(0)}$ as (recall the notation below~\eqref{eq:Dtheta})
\begin{subequations}
	\begin{align}
		(\widehat{H}\phi)^{(0)} &= g^{*} [a_{3}c_{-}+ a_{4}c_{+}] \\
		(\widehat{H}\phi )^{(1)} &= h^{*} \phi^{(1)} \,,
	\end{align}
\end{subequations}
is self-adjoint. Here and in the following, to ease notation, we denote $h \equiv h_{\widetilde{m}_{j},\widetilde{\kappa}_{j}}$ as well as $\Kilbert = \Kilbert_{\widetilde{m}_{j},\widetilde{\kappa}_{j}}$

The proof of $(\widehat{H}, \widehat{D})$ being self-adjoint is very similar to \cite[p.~12--13]{HT20}, hence we will be quite brief. First, the fact that $\widehat{D} \subset \widehat{\Hilbert}$ is dense, can be seen in exactly the same way as in \cite{HT20}. 

Next, in order to show that $\widehat{H}$ is symmetric on $\widehat{D}$, we take, completely analogously to \cite[Eqs.~(73)--(86)]{HT20}, some $\phi, \eta \in \widehat{D}$ and compute the difference $\langle \phi, \widehat{H}\eta\>\rangle_{\widehat{\Hilbert}} - \langle \widehat{H}\phi, \eta\>\rangle_{\widehat{\Hilbert}}$. Denoting $c_\pm = \eta_\pm(0)$ and $d_\pm = \phi_\pm (0)$, and using that $a_1 a_4 - a_2 a_3 = 1$, we find
\begin{equation}
\begin{split}
\langle \phi, \widehat{H}\eta\>\rangle_{\widehat{\Hilbert}} &- \langle \widehat{H}\phi, \eta\>\rangle_{\widehat{\Hilbert}} \\
&=  \langle \phi^{(1)}, h^{*}\eta^{(1)}\rangle_{L^2 ((0,\infty),\mathscr{K})} 
-  \langle h^{*}\phi^{(1)}, \eta^{(1)}\rangle_{L^2 ((0,\infty),\mathscr{K})}- [d^{*}_{+}c_{-} - d^{*}_{-}c_{+}]\,, 
\end{split}
\end{equation}
just as in \cite{HT20}. To se that $\langle \phi, \widehat{H}\eta\>\rangle_{\widehat{\Hilbert}} = \langle \widehat{H}\phi, \eta\>\rangle_{\widehat{\Hilbert}}$, we are now left to compute
\begin{align*}
  &\langle \phi^{(1)}, h^*\eta^{(1)}\rangle_{L^2 ((0,\infty),\mathscr{K})} -\langle h^*\phi^{(1)}, \eta^{(1)}\rangle_{L^2 ((0,\infty),\mathscr{K})} \\
  &=\int_{0}^{\infty}\D R\hspace{0.2cm} \partial_{R} \Bigl[ \phi_{-}^{(1)}(R)^{\dag} \: \eta_{+}^{(1)}(R)-\phi_{+}^{(1)}(R)^{\dag} \: \eta_{-}^{(1)}(R) \Bigr]\\[2mm]
  &=\left[\phi_{-}^{(1)}(R)^{\dag} \: \eta_{+}^{(1)}(R)-\phi_{+}^{(1)}(R)^{\dag} \: \eta_{-}^{(1)}(R)\right]_{0}^{\infty}\\[2mm]
  &= \lim\limits_{R\searrow 0}\left[\phi_{+}^{(1)}(R)^{\dag} \: \eta_{-}^{(1)}(R)-\phi_{-}^{(1)}(R)^{\dag} \: \eta_{+}^{(1)}(R)\right]\\[2mm]
  &= d^{*}_{+}c_{-} - d^{*}_{-}c_{+}\,,
\end{align*}
where in the first step we employed that all the terms not involving the derivative $\partial_R$ cancel (cf.~\cite[Eqs.~(75)--(79)]{HT20}). In the penultimate step, we used that $\phi^{(1)}_\pm, \eta^{(1)}_\pm$ vanish at infinity (as follows from them being continuous and in $L^2$). Finally, in the last step we used the IBC in the form of \eqref{eq:hatD}. 

After having proven that $\widehat{H}$ is symmetric on $\widehat{D}$, it remains to show that $\widehat{D}=D(\widehat{H}^{*})$.
In order to do so, first note that $\widehat{D} \subseteq D(\widehat{H}^{*}) \subseteq \CCC \oplus D(h^{*})$. 
Any given $\phi\in\CCC\oplus D(h^{*})$ lies in $D(\widehat{H}^{*})$ if and only if there exists some $\xi \in\widehat{\Hilbert}$ such that for every $\eta\in\widehat{D}$, it holds that $ \langle\xi, \eta \rangle_{\widehat{\Hilbert}} = \langle \phi, \widehat{H} \eta \rangle_{\widehat{\Hilbert}}$. The right-hand side can now be computed, completely analogously to \cite[Eqs.~(89)--(94)]{HT20}, as 
\begin{equation} \label{eq:sa}
\begin{split}
\langle \phi, \widehat{H} \eta \rangle_{\widehat{\Hilbert}} 
= \bigl[ -(d_{-}a_{1} + &d_{+}a_{2}) + g\phi^{(0)} \bigr]^{*}
(a_{3}c_{-} +a_{4}c_{+}) \\
& + \langle h^{*}\phi^{(1)}, 
\eta^{(1)}\rangle_{L^2 ((0,\infty),\mathscr{K})} 
+ \langle g^{*}(a_{3}d_{-} + a_{4}d_{+}), \eta^{(0)}\rangle_{\CCC} 
\end{split}
\end{equation}
where we again abbreviated $c_\pm = \eta_\pm(0)$ and $d_\pm = \phi_\pm (0)$. 
From \eqref{eq:sa} we conclude that $ \langle\xi, \eta \rangle_{\widehat{\Hilbert}} = \langle \phi, \widehat{H} \eta \rangle_{\widehat{\Hilbert}}$ is true for all $\eta \in \widehat{D}$, if and only if 
\begin{equation}
    \xi^{(0)} = g^{*}(a_{3}d_{-} + a_{4}d_{+}) \quad \text{and} \quad 
   \xi^{(1)} = h^{*}\phi^{(1)},
\end{equation}
and $\phi$ satisfies the IBC
\begin{equation}
a_{1}d_{-} + a_{2}d_{+} = g\phi^{(0)}\,.
\end{equation}
This means, $\phi \in \widehat{D}$ and $\xi = \widehat{H} \phi$, i.e., $\widehat{H}$ is self-adjoint on $\widehat{D}$.

\subsubsection{Particle Creation} \label{proof3}

Assume that particle creation did \emph{not} occur, i.e., that the Hamiltonian were block diagonal in the decomposition  $\Hilbert^{(0)} \oplus \Hilbert^{(1)}$, say
\begin{equation}
H = \left(
\begin{array}{c|c}
	F_{0} & 0 \\ \hline
	0 & F_{1}
\end{array}
\right)\,,
\end{equation}
where $F_{0}$ and $F_{1}$ are blocks that act on $\Hilbert^{0}$ and $\Hilbert^{(1)}$ respectively. Under this assumption, the domain of $H$ would be the Cartesian product of the domain of $F_{0}$ (which must be $\Hilbert^{(0)}$) and the domain of $F_{1}$ (a dense subspace of $\Hilbert^{(1)}$). 
Thus, for any $\Psi^{(0)} \in \CCC \setminus \{0\}$, $(\Psi^{(0)}, \Psi^{(1)} \equiv 0)$ is in the domain of a block-diagonal $H$. 
On the other hand, wave functions in the domain of $H$ must satisfy the IBC \eqref{generalizedIBC}, which implies that, since $\Psi^{(0)} \neq 0$, $\Psi^{(1)}$ cannot be identically equal to zero. This is a contradiction, and hence the IBC forces $H$ to be non block-diagonal and we have thus proven item $5$ of Theorem \ref{thm:thmref}. 

This concludes the proof of Theorem \ref{thm:thmref}. \qed

\section{Creation Rate and Trajectories: Proof of Proposition~\ref{prop:2}}
\label{chapter4}

In this section, we verify the claims from Section \ref{subsec:jump}. To this end, we compute the asymptotics of the probability current $j^\mu$ in Proposition \ref{prop:1} in Section \ref{subsec:jproof}. Afterwards, in Section \ref{sec:pfprop2}, we give the proof of Proposition \ref{prop:2}, yielding the asymptotic behavior of the trajectories as solutions to the simplified Bohmian equation of motion \eqref{eq:BohmEMapp}. Finally, in Section \ref{sec:equivariance}, we (non-rigorously) verify that the Bohm-Bell jump process defined in Section \ref{subsec:bohmbell} is equivariant. 

As in Section~\ref{subsubsec:trajectories}, we will consider only the Hamiltonian provided by Theorem~\ref{theorem3.1} (i.e., $a_1=1= a_4$, $a_2=0= a_3$ in the notation of Theorem~\ref{thm:thmref}) and only wave functions $\Psi$ from $\widehat{D} \subset \Hilbert$ as in \eqref{hatDdef}, an invariant subspace comprising $\Hilbert^{(0)}$ and $\Kilbert_{\widetilde{m}_j\widetilde{\kappa}_j}$.

\subsection{Probability Current} \label{subsec:jproof}

In the following proposition, we provide the asymptotic behavior of the probability current $j^\mu$. Recall that
\be
j^0=|\Psi^{(1)}|^2\,, \quad j^i=\Psi^{(1)\dagger} \alpha^i \Psi^{(1)} ~~\text{for}~i=1,2,3.
\ee

\begin{prop}[Asymptotic behavior of the current] \label{prop:1}
	Let $\Psi \in \widehat{D}$ and let $c_\pm$ be defined as in \eqref{cdef}. Then the components of the probability current $j^\mu$ defined in \eqref{jdef} in the basis $e_x$ of \eqref{edef} obey the asymptotics (as $r\to 0$)
\begin{subequations} \label{eq:jproof}
\begin{align}
	 j^0(r,\vartheta,\varphi) &= \frac{|c_+|^2 + |c_-|^2}{4\pi|Q|} r^{-1} + \mathcal{O}(r^0) \label{j0}\\
     j^1(r,\vartheta,\varphi) &= -\dfrac{\mathrm{Im}[c_{-}^{*}c_{+}]}{2\pi |Q|} r^{-1} +\mathcal{O}(r^0) \label{j1}\\
     j^2(r,\vartheta,\varphi) &= \mathcal{O}(r^0) \label{j2}\\
     j^3(r,\vartheta,\varphi) &= \mathrm{sgn}(\widetilde{m}_{j}\widetilde{\kappa}_{j})\sin\vartheta \dfrac{\mathrm{Re}[c_{-}^{*}c_{+}]}{2\pi|Q|} r^{-1} + \sin \vartheta \, \mathcal{O}(r^0)\,. \label{j3}
\end{align}
\end{subequations}
\end{prop}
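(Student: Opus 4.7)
The plan is to substitute the near-singularity asymptotics from Theorem~\ref{theorem3.1} directly into the pointwise expressions $j^0=|\Psi^{(1)}|^2$ and $j^i=\Psi^{(1)\dagger}\alpha^i\Psi^{(1)}$. For $\Psi\in\widehat{D}$ the higher angular momentum sectors are absent and the general expansion \eqref{eq:Psiasympt2} collapses to
\begin{equation*}
\Psi^{(1)}(r,\vomega)=\frac{c_+\Phi^+_{\widetilde{m}_j\widetilde{\kappa}_j}(\vomega)+c_-\Phi^-_{\widetilde{m}_j\widetilde{\kappa}_j}(\vomega)}{|Q|^{1/2}\,r^{1/2}}+\mathcal{O}(r^{1/2})\,,
\end{equation*}
with $c_\pm$ as in \eqref{cdef}. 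Squaring or sandwiching this expansion produces a leading $\mathcal{O}(r^{-1})$ term whose coefficient is a pointwise sesquilinear form in $c_\pm$ built from bilinears of the $\Phi^\pm$, while the interference of the singular part with the $\mathcal{O}(r^{1/2})$ remainder contributes only at $\mathcal{O}(r^0)$.

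The two structural facts I would then invoke to evaluate these bilinears are: (i)~$\Phi^\pm_{\widetilde{m}_j\widetilde{\kappa}_j}$ lie in opposite eigenspaces of $\beta$, so $\Phi^{+\dagger}\Phi^-\equiv 0$ pointwise on $\SSS^2$, and since the $\alpha^i$ are block-off-diagonal in the standard Dirac representation, also $\Phi^{\pm\dagger}\alpha^i\Phi^\pm\equiv 0$; (ii)~under the assumption $\widetilde{\kappa}_j=\pm 1$ (equivalently $j=\tfrac12$, which is in force throughout Section~\ref{chapter4}), the explicit formulas for $\Phi^\pm$ in Appendix~\ref{app:Phi} give the pointwise normalisation $|\Phi^\pm_{\widetilde{m}_j\widetilde{\kappa}_j}(\vomega)|^2=1/(4\pi)$, because the two relevant spin-$\tfrac12$ spinor spherical harmonics reduce to $Y_0^0\chi_{\widetilde{m}_j}$ and $\vsigma\cdot\hat{r}\,Y_0^0\chi_{\widetilde{m}_j}$, both of constant Pauli norm $1/\sqrt{4\pi}$. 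Together, (i) and (ii) give \eqref{j0} at once; combined with the matrix representation $\alpha^1|_{\Kilbert_{\widetilde{m}_j\widetilde{\kappa}_j}}=\bigl(\begin{smallmatrix}0 & -\I \\ \I & 0\end{smallmatrix}\bigr)$ from \eqref{alphabeta} they force $\Phi^{+\dagger}\alpha^1\Phi^-\equiv -\I/(4\pi)$, so that \eqref{j1} follows from the elementary identity $\I(c_-^*c_+-c_+^*c_-)=-2\,\Im[c_-^*c_+]$.

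The main obstacle will be the tangential components $j^2$ and $j^3$, since $\alpha^2$ and $\alpha^3$ do not preserve $\Kilbert_{\widetilde{m}_j\widetilde{\kappa}_j}$ and the $2\times 2$-matrix shortcut of \eqref{alphabeta} is therefore unavailable. Here I would compute the pointwise cross-bilinears $\Phi^{+\dagger}\alpha^{2,3}\Phi^-$ (and their conjugates) directly from the explicit expressions for $\Phi^\pm_{\widetilde{m}_j,\pm 1}$ in Appendix~\ref{app:Phi}, reducing everything to $2$-spinor bilinears of the schematic form $\chi_{\widetilde{m}_j}^\dagger\,\sigma^a(\vsigma\cdot\hat{r})\,\chi_{\widetilde{m}_j}$ (the factor $\vsigma\cdot\hat{r}$ entering from the nontrivial block of $\Phi^\pm$), and using $\sigma^a(\vsigma\cdot\hat{r})=\hat{r}^a+\I\epsilon^{abc}\hat{r}^b\sigma^c$ together with $\chi_{\widetilde{m}_j}^\dagger\vsigma\chi_{\widetilde{m}_j}\propto\widetilde{m}_j\,\hat{e}_3$. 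I expect the $\vartheta$-channel bilinears to cancel pointwise, producing the announced absence of an $r^{-1}$ singularity in $j^2$; while the $\varphi$-channel combines the factor $\hat{r}^\varphi\propto\sin\vartheta$ with the $\sigma^3$-expectation $\propto\widetilde{m}_j$ and a $\widetilde{\kappa}_j$-dependent sign (tracking whether $\vsigma\cdot\hat{r}$ sits in the upper or lower block of $\Phi^\pm$) to yield the $\mathrm{sgn}(\widetilde{m}_j\widetilde{\kappa}_j)\sin\vartheta\,\Re[c_-^*c_+]$ coefficient in \eqref{j3}, the real part replacing the imaginary part because the $\varphi$-channel matrix element $\Phi^{+\dagger}\alpha^3\Phi^-$ turns out to be real. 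The hard part is purely the sign and Pauli-identity bookkeeping; no additional analytic input beyond Theorem~\ref{theorem3.1} and the appendix should be required.
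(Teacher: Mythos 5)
Your proposal is correct and mirrors the paper's own argument: substitute the collapsed two-term asymptotic expansion of $\Psi^{(1)}\in\widehat{D}$ into $j^\mu$ and evaluate the pointwise spinor bilinears, using $\beta$-eigenspace orthogonality and block-off-diagonality of $\alpha^i$ to kill the diagonal terms, the $j=\tfrac12$ normalization $|\Phi^\pm|^2=1/(4\pi)$ for $j^0$, and the cross-bilinears $\Phi^{+\dagger}\alpha^i\Phi^-$ for $j^1,j^2,j^3$. The only cosmetic difference is that the paper pulls the $\alpha^{2},\alpha^{3}$ cross-bilinears from \cite[Eq.~(47)]{HT22} together with \eqref{Wsigma}, whereas you propose to rederive them from Appendix~\ref{app:Phi} and Pauli-matrix identities, which is the same computation.
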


\begin{proof}
By \eqref{eq:Psiasympt2} and $\Psi\in\widehat{D}$,
\be
\Psi^{(1)}(r,\vomega) 
= \biggl( \frac{c_-}{|Q|^{1/2}} \Phi^-_{\widetilde{m}_j\widetilde{\kappa}_j}(\vomega)  
+ \frac{c_+}{|Q|^{1/2}} \Phi^+_{\widetilde{m}_j\widetilde{\kappa}_j}(\vomega) \biggr) r^{-1/2} 
+ \mathcal{O}(r^{1/2})\,.
\ee
Eq.~\eqref{j0} follows from the facts that 
\be
\langle \Phi^+_{m_j\kappa_j}(\vomega),\Phi^-_{m_j\kappa_j} (\vomega)\rangle_{\CCC^4}=0~~\forall\vomega\in\SSS^2\,,
\ee
that $\{w_1,w_2\}$ is orthonormal in $\CCC^2$, and that 
\be
\bigl|\Phi^{\pm}_{\widetilde{m}_j\widetilde{\kappa}_j}(\vomega)\bigr|^2 = \frac{1}{4\pi}~~~\forall\vomega\in\SSS^2
\ee
for $\widetilde{\kappa}_j=\pm 1$ (so $j=\tfrac{1}{2}$), which can be easily verified from the definition \eqref{eq:Psimjj} using 
\be
Y^0_0(\vartheta,\varphi)=\frac{1}{\sqrt{4\pi}},~~
Y^{\pm 1}_1(\vartheta,\varphi)=\pm \sqrt{\frac{3}{8\pi}} \E^{\pm\I \varphi} \sin \vartheta,~~
Y^0_1(\vartheta,\varphi)= \sqrt{\frac{3}{4\pi}}\cos\vartheta
\,.
\ee 

We turn to \eqref{j1}--\eqref{j3}. Recalling that
\be
\pmb{\alpha} = \begin{pmatrix}
    0  &  \pmb{\sigma} \\
    \pmb{\sigma}  &  0
\end{pmatrix}\,,
\ee
one sees from \eqref{jointeigenbasis} that
\be
\langle \Phi^{\pm}_{{m}_j{\kappa}_j}(\vomega), \alpha^i \Phi^{\pm}_{{m}_j{\kappa}_j} (\vomega)\rangle_{\CCC^4} = 0~~~\forall \vomega\in\SSS^2 \:\forall i=1,2,3.
\ee
Further calculations show that
\begin{subequations}
\begin{align}
    & \langle\Phi_{\widetilde{m}_{j},\widetilde{\kappa}_{j}}^{+}(\pmb{\omega}), \alpha^1\Phi_{\widetilde{m}_{j},\widetilde{\kappa}_{j}}^{-}(\pmb{\omega})\rangle_{\CCC^{4}} = \dfrac{-\I}{4\pi} \,, \label{proofprop1} \\
    &\langle\Phi_{\widetilde{m}_{j},\widetilde{\kappa}_{j}}^{+}(\pmb{\omega}), \alpha^2\Phi_{\widetilde{m}_{j},\widetilde{\kappa}_{j}}^{-}(\pmb{\omega})\rangle_{\CCC^{4}} = 0 \,, \label{proofprop2}\\
    &\langle\Phi_{\widetilde{m}_{j},\widetilde{\kappa}_{j}}^{+}(\pmb{\omega}), \alpha^3\Phi_{\widetilde{m}_{j},\widetilde{\kappa}_{j}}^{-}(\pmb{\omega})\rangle_{\CCC^{4}} = \mathrm{sgn}(\widetilde{m}_{j}\widetilde{\kappa}_{j}) \, \sin\vartheta\dfrac{1}{4\pi}\label{proofprop3} \,. 
\end{align}
\end{subequations}
(In fact, this follows from \cite[Eq.~(47)]{HT22} and \eqref{Wsigma}.) From these relations, \eqref{j1}--\eqref{j3} follow.
\end{proof}

\subsection{Bohmian Trajectories: Proof of Proposition \ref{prop:2}}
\label{sec:pfprop2}

From \eqref{eq:jproof} at $t_0$ while assuming \eqref{assumption_c} (in particular $c_-\neq 0 \neq c_+$) together with \eqref{Bohmcoo2}, we obtain for the approximate trajectories (i.e., the solutions to \eqref{eq:BohmEMapp}), analogously to \cite[Eq.~(60)]{HT22}, that
\begin{subequations}\label{Bohmcoo3}
\begin{align}
\frac{\D r(t)}{\D t}&= -\frac{2Q^2\,\Im[c_-^*c_+]}{|c_+|^2+|c_-|^2} \:  r^{-2}+\mathcal{O}(r^{-1}) \label{radialvel}\\
\frac{\D \vartheta(t)}{\D t}&= \mathcal{O}(r^{-1}) \label{polarvel}\\
\frac{\D \varphi(t)}{\D t}&=  \mathrm{sgn}(\widetilde{m}_{j}\widetilde{\kappa}_{j}) \dfrac{2Q\,\Re[c_-^*c_+]}{|c_+|^2+|c_-|^2} \: r^{-2} +\mathcal{O}(r^{-1})\,. \label{azimuthalvel}
\end{align}
\end{subequations}
Similarly to the arguments given at the end of Section~4 in \cite{HT22}, the differential equations \eqref{Bohmcoo3} can be solved by a simple separation of variables, where one first solves \eqref{radialvel} and then feeds the result into the other two relations, eventually yielding \eqref{eq:asymp}. \qed

\subsection{Equivariance of the Bohm-Bell Process}
\label{sec:equivariance}

In this section, we non-rigorously verify that the process $Q_t$ is equivariant. 

First, away from the origin, we must have equivariance by conservation of probability expressed through the continuity equation
\be
\nabla_\mu j^\mu=0\,.
\ee

Therefore, the only place where probability is gained or lost is at the singularity $r=0$. Consider the probability flux through the surface element $\D^{2}\pmb{\omega}$ near $r =0$ in coordinate space $[0,\infty)\times \SSS^2$, which is $v^1(r,\vomega) \,\rho(r,\vomega)\, \D^2\vomega$ with probability density $\rho$ in coordinate space given by $|\Psi^{(1)}(r,\vomega)|^2 A^{-1} r^2$ according to \eqref{Born3b}. By \eqref{Bohmcoo2}, $v^1=A^2 j^1/j^0$. Thus, the flux is
\be
Ar^2 \Psi^{(1)}(r,\vomega)^\dagger \alpha^1 \, \Psi^{(1)}(r,\vomega) \, \D^2\vomega 
\ee
which converges, as $r\to 0$, to
\be\label{flux}
J_\mathrm{rad} \, \D^2\vomega:= -\frac{\Im[c_-^*c_+]}{2\pi|Q|} \, \D^2\vomega
\ee
by \eqref{j1}. This is the quantity $J_{\Psi_{t}}^{\perp}(q) \, \nu(\D q, q')$ of \eqref{eq:jumprategeneral}. Thus, the rate of gain (positive or negative) of probability at the singularity is given by $4\pi J_{\mathrm{rad}}$.

This agrees with the rate of gain (positive or negative) of probability at $r=0$ of $Q_{t}$: Indeed, in case that $J_{\mathrm{rad}} > 0$, then no trajectory ends at the origin (so no probability is lost) and the amount transported by jumps from $\emptyset$ to the trajectories emanating at time $t_0$ is given by the probability at $\emptyset$ times the total jump rate \eqref{totaljumprate} from $\emptyset$, i.e.,
\begin{equation}
\bigl|\Psi_{t_{0}}^{(0)}\bigr|^{2} \sigma_{t_0}(\emptyset \to \SSS^2) 
= \frac{2}{|Q|}\max\{0, -\mathrm{Im}[c_{-}^{*}(t_{0})c_{+}(t_{0})] \}= 4 \pi J_{\mathrm{rad}} \,.
\end{equation}
In the contrary case, $J_{\mathrm{rad}} < 0$, then no upward jump occurs (and thus no probability is gained at the origin) and the lost amount of probability automatically agrees with the flux across the sphere (since $Q_t$ is $|\Psi_t|^2$-distributed). 
 
Finally, in order to ensure preservation of the $|\Psi|^2$-distribution, it remains to check that the distribution of $Q_{t}$ over the emanating trajectories yields the flux \eqref{flux} through $\D^{2}\pmb{\omega}$ in the $r \to 0$ limit. This follows from the fact that both the flux \eqref{flux} and the jump rate \eqref{jumprate} are uniform over the sphere. 
This concludes our argument for equivariance.

\section{Conclusions}
\label{sec:conclusions}

In this work, we have considered a model of particle creation and annihilation at the singularity of the sRN space-time that avoids the problem of ultraviolet divergence by using interior-boundary conditions. Furthermore, we constructed the corresponding Bohm-Bell process, an equivariant Markov jump process defined through 2 equations: Bohm's equation of motion \eqref{eq:BohmEM} and the formula \eqref{jumprate} which dictates the rate at which particle creation occurs.

For further research, one can consider full Fock space, including particle sectors with more than $1$ particle. It would also be of interest to prove the existence of the Bohm-Bell process, and to define it also for $\widetilde{\kappa}_j\neq \pm 1$ and/or wave functions outside the subspace $\widehat{D}$. We expect the process to be qualitatively similar in these other cases. Furthermore, it would be interesting to consider the case of space-time singularities other than that of sRN.

\appendix

\section{The $\Phi_{m_j\kappa_j}$ in Spherical Coordinates}
\label{app:Phi}

In Minkowski space-time, let $\tilde e$ be an orthonormal basis (Lorentz frame) and $\tilde b$ the corresponding basis in 4d Dirac spin space $S$. Now for spherical coordinates $r\in(0,\infty),\vartheta\in[0,\pi],\varphi\in[0,2\pi)$, let
\begin{subequations}
\begin{align}
\ve_r &=(\sin\vartheta \cos\varphi, \sin\vartheta \sin\varphi, \cos \vartheta)\\
\ve_\vartheta &= (\cos\vartheta \cos\varphi, \cos\vartheta \sin\varphi, -\sin \vartheta)\\
\ve_\varphi &= (-\sin\varphi, \cos \varphi,0)
\end{align}
\end{subequations}
be the orthonormal basis of $\RRR^3$ whose vectors point in the directions of increasing $r,\vartheta,\varphi$ coordinates. Together with the timelike vector of $\tilde e$, they form another, $(r,\vartheta,\varphi)$-dependent Lorentz frame $e$; let $b$ be the corresponding basis of $S$. Then, for any element of $S$, its (spherical) $b$-coefficients are obtained from the (Cartesian) $\tilde b$-coefficients through multiplication by
\be
\begin{bmatrix} W & 0 \\ 0 & W \end{bmatrix}
\ee
with the unitary $2\times 2$ matrix
\be\label{Wdef}
W:= \frac{1}{\sqrt{2}} 
\begin{bmatrix} 
\I\E^{\I(\vartheta+\varphi)/2} & \quad\; \E^{\I(\vartheta-\varphi)/2}\\
\I\E^{\I(-\vartheta+\varphi)/2} & -\E^{\I(-\vartheta-\varphi)/2} 
\end{bmatrix}  \,,
\ee
(whose columns will be denoted by $w_1$ and $w_2$). This follows from the easily verifiable facts that, for $\vsigma=(\sigma_1,\sigma_2,\sigma_3)$ the triple of Pauli matrices,
\begin{subequations}\label{Wsigma}
\begin{align}
W^{-1}\sigma_1W &= \ve_r\cdot \vsigma\\
W^{-1}\sigma_2W &= \ve_\vartheta\cdot \vsigma\\
W^{-1}\sigma_3W &= \ve_\varphi\cdot \vsigma\,,
\end{align}
\end{subequations}
which shows that 2-spinors transform according to $W$, together with the fact that spatial rotations are implemented on 4-spinors as block diagonal $4\times 4$ matrices with $2\times 2$ blocks that are equal to each other and given by the action of the rotation on 2-spinors \cite[(2.172) and (1.38)]{Tha92}.

Relative to the Cartesian basis $\tilde{b}$ in $S$, the explicit form of the functions $\Phi^{\pm}_{m_j\kappa_j}$ is given in \cite[Sec.~4.6.4]{Tha92}; translated into the spherical basis $b$, they are given as follows:
\be\label{jointeigenbasis}
\Phi_{m_{j},\mp(j+\frac{1}{2})}^{+} = \begin{pmatrix}
    \mathrm{i}\Psi_{j\mp\frac{1}{2}}^{m_{j}}\\
    0
\end{pmatrix}, \hspace{1cm} \Phi_{m_{j},\mp(j+\frac{1}{2})}^{-} = \begin{pmatrix}
    0 \\
    \I\Psi_{j\pm\frac{1}{2}}^{m_{j}} 
\end{pmatrix} \,,
\ee
where
\begin{subequations}\label{eq:Psimjj}
\begin{align} 
    \Psi_{j-\frac{1}{2}}^{m_{j}}
    &= \sqrt{\dfrac{j+m_j}{2j}}\: Y_{j- 1/2}^{m_{j}-1/2} w_1
        +\sqrt{\dfrac{j-m_j}{2j}} \: Y_{j- 1/2}^{m_{j}+1/2} w_2
    \\
    \Psi_{j+\frac{1}{2}}^{m_{j}} 
    &= \sqrt{\dfrac{j+1-m_j}{2j+2}} \: Y_{j+1/2}^{m_{j}-1/2} w_1
        -\sqrt{\dfrac{j+1+m_j}{2j+2}} \: Y_{j+ 1/2}^{m_{j}+1/2} w_2
\end{align}
\end{subequations}
with $Y_{\ell}^{m}$ the usual spherical harmonics (e.g., \cite[Sec.~4.6.4]{Tha92}), defined for $\ell \in \mathbb{N}\cup\{0\}$ and $m \in \{-\ell,\ldots,\ell\}$ (not to be confused with the mass in the Dirac equation).

\bigskip

\noindent{\it Acknowledgments.} We thank Michael Kiessling, A. Shadi Tahvildar-Zadeh, and Stefan Teufel for helpful discussions.

\section*{Declarations}

\noindent{\it Funding.} JH gratefully acknowledges partial financial support by the ERC Advanced Grant ``RMTBeyond'' No. 101020331.

\noindent{\it Conflict of interests.} The authors declare no conflict of interest.

\noindent{\it Availability of data and material.} Not applicable.

\noindent{\it Code availability.} Not applicable.

\end{document}